\documentclass[sigconf]{acmart}

\AtBeginDocument{%
  \providecommand\BibTeX{{%
    \normalfont B\kern-0.5em{\scshape i\kern-0.25em b}\kern-0.8em\TeX}}}

\usepackage{graphicx}
\usepackage{textcomp}
\usepackage{xcolor}
\usepackage{booktabs} 
\usepackage[ruled, linesnumbered]{algorithm2e}
\usepackage{balance}
\usepackage{tikz}
\usetikzlibrary{trees}
\usepackage{lipsum,adjustbox}
\usetikzlibrary{positioning}
\usetikzlibrary{mindmap,trees}
\usepackage{verbatim}
\usepackage{multirow}
\usepackage[T1]{fontenc}
\usepackage{stmaryrd}
\usepackage{breqn}
\usepackage[shortlabels]{enumitem}

\usepackage{multicol}
\usepackage[skins]{tcolorbox}
\newtcolorbox{myframe}[2][]{%
	enhanced,colback=white,colframe=black,coltitle=black,
	sharp corners,boxrule=0.6pt,
	fonttitle=\itshape,
	attach boxed title to top left={yshift=-0.3\baselineskip-0.4pt,xshift=2mm},
	boxed title style={tile,size=minimal,left=0.5mm,right=0.5mm,
		colback=white,before upper=\strut},
	title=#2,#1
}

\newtheorem{myAttack}{Attack}

\newtheorem{proposition}{Proposition}
\usepackage{caption}
\usepackage{subcaption}

\usepackage{color, colortbl}
\definecolor{Gray}{gray}{0.9}
\definecolor{airforceblue}{rgb}{0.36, 0.54, 0.66}
\definecolor{aliceblue}{rgb}{0.94, 0.97, 1.0}
\definecolor{alizarin}{rgb}{0.82, 0.1, 0.26}
\definecolor{amber}{rgb}{1.0, 0.75, 0.0}
\definecolor{amber(sae/ece)}{rgb}{1.0, 0.49, 0.0}
\definecolor{bronze}{rgb}{0.8, 0.5, 0.2}
\definecolor{battleshipgrey}{rgb}{0.52, 0.52, 0.51}
\definecolor{bole}{rgb}{0.47, 0.27, 0.23}
\definecolor{bulgarianrose}{rgb}{0.28, 0.02, 0.03}
\definecolor{cadet}{rgb}{0.33, 0.41, 0.47}
\definecolor{ceil}{rgb}{0.57, 0.63, 0.81}
\definecolor{cerulean}{rgb}{0.0, 0.48, 0.65}
\definecolor{charcoal}{rgb}{0.21, 0.27, 0.31}
\definecolor{coolblack}{rgb}{0.0, 0.18, 0.39}
\definecolor{coolgrey}{rgb}{0.55, 0.57, 0.67}
\definecolor{darkcandyapplered}{rgb}{0.64, 0.0, 0.0}
\definecolor{darkbrown}{rgb}{0.4, 0.26, 0.13}
\definecolor{darkcerulean}{rgb}{0.03, 0.27, 0.49}
\definecolor{darkgray}{rgb}{0.66, 0.66, 0.66}
\definecolor{darkjunglegreen}{rgb}{0.1, 0.14, 0.13}
\definecolor{darktaupe}{rgb}{0.28, 0.24, 0.2}
\definecolor{davy\'sgrey}{rgb}{0.33, 0.33, 0.33}
\definecolor{frenchblue}{rgb}{0.0, 0.45, 0.73}
\definecolor{almond}{rgb}{0.94, 0.87, 0.8}
\definecolor{beaublue}{rgb}{0.74, 0.83, 0.9}
\definecolor{beige}{rgb}{0.96, 0.96, 0.86}
\definecolor{bisque}{rgb}{1.0, 0.89, 0.77}
\definecolor{black}{rgb}{0.0, 0.0, 0.0}
\definecolor{fluorescentorange}{rgb}{1.0, 0.75, 0.0}
\definecolor{ghostwhite}{rgb}{0.97, 0.97, 1.0}
\definecolor{antiquewhite}{rgb}{0.98, 0.92, 0.84}
\definecolor{LightCyan}{rgb}{0.88,1,1}

\newcommand{\rulesep}{\unskip\ \vrule\ }

\def\BibTeX{{\rm B\kern-.05em{\sc i\kern-.025em b}\kern-.08em
    T\kern-.1667em\lower.7ex\hbox{E}\kern-.125emX}}
\begin{document}

\title{Make Split, not Hijack: Preventing Feature-Space Hijacking Attacks in Split Learning}

\author{Tanveer Khan}
\orcid{0000-0001-7296-2178}
\affiliation{%
  \institution{Tampere University}
  \streetaddress{}
  \city{Tampere}
  \country{Finland}
  \postcode{}
}
\email{tanveer.khan@tuni.fi}
\author{Mindaugas Budzys}
\orcid{0000-0002-1913-7985}
\affiliation{%
  \institution{Tampere University}
  \streetaddress{}
  \city{Tampere}
  \country{Finland}
  \postcode{}
}
\email{mindaugas.budzys@tuni.fi}
\author{Antonis Michalas}
\orcid{0000-0002-0189-3520}
\affiliation{%
  \institution{Tampere University}
  \streetaddress{}
  \city{Tampere}
  \country{Finland}
  \postcode{}
}
\affiliation{
  \institution{RISE Research Institutes of Sweden}
  \city{Gothenburg}
  \country{Sweden}
}
\email{antonios.michalas@tuni.fi}

\begin{abstract}
The popularity of Machine Learning (ML) makes the privacy of
sensitive data more imperative than ever. Collaborative learning techniques like Split Learning (SL) aim to protect client data while enhancing ML processes. Though promising, SL has been proved to be vulnerable to a plethora of attacks, 
thus raising concerns about its effectiveness on data privacy. In this work, we introduce a hybrid approach combining SL and Function Secret Sharing (FSS) to ensure client data privacy. The client adds a random mask to the activation map before sending it to the servers. The servers cannot access the original function but instead work with shares generated using FSS. Consequently, during both forward and backward propagation, the servers cannot reconstruct the client's raw data from the activation map. Furthermore, through visual invertibility, we demonstrate that the server is incapable of reconstructing the raw image data from the activation map when using FSS.
It enhances privacy by reducing privacy leakage compared to other SL-based approaches where the server can access client input information. Our approach also ensures security against feature space hijacking attack, protecting sensitive information from potential manipulation. Our protocols yield promising results, reducing communication overhead by over~$2\times$ and training time by over~$7\times$ compared to the same model with FSS, without any SL. Also, we show that our approach achieves~$>96\%$ accuracy and remains equivalent to the plaintext models.
\end{abstract}



\keywords{Function Secret Sharing, Machine Learning, Privacy, Split Learning}



\maketitle

\section{Introduction}
\label{sec:introduction}


In recent years, Machine Learning (ML) has developed into a highly valuable field used in everyday situations. Various ML applications, like large language models~\cite{openai_chatgpt}, have become more accessible than ever before. However, ML requires a suitable amount of processing power to train. As such, various companies have started offering Machine-Learning-as-a-Service (MLaaS) to allow clients to use maintained servers to outsource resource-intensive ML tasks. Despite its benefits, MLaaS raises privacy concerns, as it requires sharing client data to a company during model training~\cite{shokri2017membership,hu2022membership}. 
To address this privacy issue, researchers have begun developing Privacy-Preserving Machine Learning (PPML) techniques to protect client data~\cite{khan2021blind, khan2023learning, khan2024wildest}. 
These techniques encompass cryptographic approaches like Homomorphic Encryption (HE)~\cite{gentry2009fully} and Multi-Party Computation (MPC)~\cite{yao1986generate} as well as collaborative learning approaches such as Federated Learning (FL)~\cite{mcmahan2016federated, konevcny2016federated} and Split Learning (SL)~\cite{abuadbba2020can}.
These techniques have been used in a number of works and successfully reduce the privacy leakage of ML~\cite{abuadbba2020can,sav2021poseidon,hesamifard2016cryptodl, nguyen2023split, khan2023love, khan2023more, khan2023split}. 

FL and SL are two methods which allow to train a model collectively from various distributed data sources without sharing raw data. FL, introduced by Google AI Blog, allow each user to run a copy of the entire model on its data. The server receives updated weights from each user and aggregates them. However, in FL, the users’ model weights are shared with the server, which can leak sensitive information~\cite{hitaj2017deep}. The SL approach proposed by Gupta and Raskar~\cite{gupta2018distributed} offers a number of significant advantages over FL. With SL, multiple parties can train an ML model together while keeping their respective parts private. Similar to FL, SL does not share raw data but it has the benefit of not disclosing the model’s architecture and weights and clients can train an ML model without sharing their data with a server, which has the remaining part of the model. Hence, in this work, our focus is on the SL technique, where low-cost operations are performed on the client-side while, computationally expensive operations are outsourced to a powerful server, expediting the models training process. However, SL without any additional privacy-preserving measures is prone to privacy leakage through Feature Space Hijacking Attack (FSHA)~\cite{pasquini2021unleashing} or data analysis techniques, such as Visual Invertibility (VI)~\cite{abuadbba2020can}. 

Through VI a malicious server is able to infer information of the underlying input data by analysing the visual similarity between the original image and the produced activation map after a number of convolutional layers. 
Looking into improving the security and efficiency of individual approaches, researchers have lately started exploring solutions that combine several privacy-preserving techniques~\cite{khan2023split,truex2019hybrid,sav2021poseidon}. 
As such, researchers proposed to use differential privacy (DP)~\cite{abuadbba2020can} with SL, 
but noted, that it greatly reduces the utility of the trained model. 
Others have proposed encrypting the intermediate results using HE before sending them to the server~\cite{khan2023split}. This allows the server to compute the final layers of the ML model and get the correct outputs without leaking additional information to the server. Despite the increased security, the authors report, that the computational overhead even for a single linear layer using HE increases the training time drastically, when compared to the unencrypted models. Regarding the FSHA, existing work in literature does not address this attack in SL. FSHA allows a malicious server to hijack the model and steer it towards a specific target feature space. Once the model maps into the target feature space, the malicious server can recover the private training instances by reversing the known feature space. In short, FSHA allows a malicious server to take control of the model and extract private information from training data. 

In this work, to secure SL against FSHA and visual analysis, we use Function Secret Sharing (FSS)~\cite{boyle2015function} -- an MPC technique which allows two parties to perform computations on a public input using a private, secret shared function. 
We show that FSS can be used to eliminate the privacy leakage of SL while at the same time keeping data safe from FSHA. In addition, it greatly reduces the computational complexity, when compared to existing 
techniques~\cite{khan2023split, nguyen2023split, ryffel2020ariann}. Furthermore, we show that combining FSS with SL results in lower communications overhead than combining HE with SL. 

\medskip

\noindent \textit{\textbf{Contributions:}}
The main contributions of this work, can be summarized as follows:

\begin{enumerate}[\bfseries C1., leftmargin=0.7cm]
    \item We designed an efficient PPML protocol using SL and FSS.  Our contribution lies in being the first to incorporate FSS into SL, aiming at enhancing the privacy 
    of SL.
    \item We have conducted experiments on the MNIST dataset to demonstrate privacy leakage in SL after the first and second convolution layers. We have used FSS to mitigate this issue, offering a more efficient alternative to the computationally expensive HE method mentioned in prior work by Nguyen \textit{et al.}~\cite{nguyen2023split}.  Our approach enables running multiple layers on the server-side, unlike~\cite{nguyen2023split}, which considered only one layer on the server-side.
    \item We designed an FSS based vanilla SL (refer to as private vanilla SL model) protocol and compared it with AriaNN~\cite{ryffel2020ariann} (private local). Our findings demonstrated that our approach is more efficient in terms of communication cost and complexity while maintaining the same level of accuracy and privacy.
    \item To date, VI has not been considered an attack in research; rather, it has been seen as a privacy leakage metric in SL. In our work, we use VI as an attack, specifically a VI Inference Attack (VIIA) in SL.
    \item We demonstrated the security of our approach against 
    FSHA. Also, our VI analysis demonstrates that the server cannot reconstruct raw image data from the activation maps. 
\end{enumerate}

\subsection{Organization}
The rest of the paper is organized as follows, in \autoref{sec:preliminaries}, we 
provide the necessary background information on Convolution Neural Network (CNN), SL, 
and FSS. In \autoref{sec:FSS_primitives}, we define the notation used throughout the paper, discuss FSS in a higher detail, and describe its core primitives. In \autoref{sec:related_Work}, we present important published works in the area of SL and FSS.
The architectures of the two local models: one without a split and another with a split 
are presented in \autoref{sec:architecture}, 
where we elaborate on the private local model and introduce the key actors in our protocols. 
A detailed description of our private vanilla SL protocol is described in \autoref{sec:methodology}, followed by the considered threat model \& security analysis in \autoref{sec:threat_model_and_sec}, 
 while \autoref{sec:perfanal} provides a detailed explanation of the experimental setup and provides extensive experimental results for our proposed approach. Finally, we conclude the paper in \autoref{sec:conclusion}.

\section{Preliminaries}
\label{sec:preliminaries}
In this section, we cover the underlying ML and cryptographic techniques we utilize in our design. 

\subsection{Convolutional neural network}
CNN is a special type of neural network which have shown great results in various ML classification and recognition tasks for 
two dimensional (2D) images data~\cite{lecun1998CNN}. 
For our task, we employ a~2D CNN which consists of the following layers:    
\begin{itemize}[leftmargin=0.6cm]
    \item \underline{2D Convolution (Conv2D)}: 
    It is used in 
    CNN to perform convolution operation on the input data, that is, this layer applies a set of kernels, to a sliding window of input data, which extracts local features from the data (see \autoref{fig:1dcnn})~\cite{lecun1998CNN}. 
\item \underline{ReLU}:  
     It is a non-linear activation function that is commonly used in CNN to compute a simple function: 
     
     \begin{center}
         $f(x) = \max(0, x)$.
     \end{center} 


     \item \underline{Max-pooling}: It is 
     used in CNN
     to create a downsampled activation map ($ATm$) with a lower resolution than the input. This operation helps to reduce the computational cost of the network and prevent overfitting by compressing information in the $ATm$. 
    
     \item \underline{Fully Connected (FC)}: 
     In FC layers, each neuron in the layer is connected to every other neuron in the previous layer. This means that the output of each neuron in the layer is computed as a weighted sum of the outputs of all the neurons in the previous layer. 
\end{itemize}

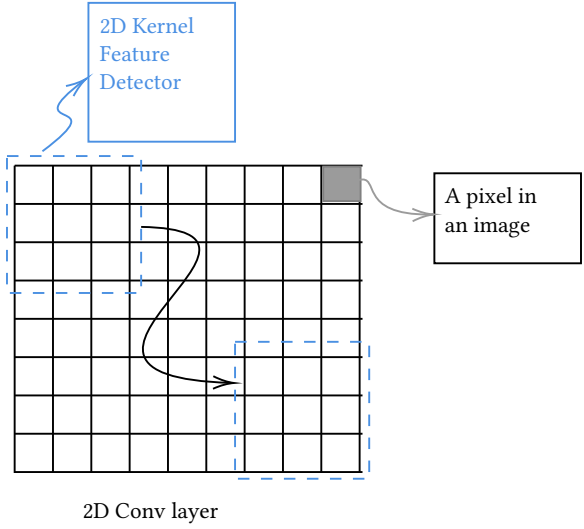
\begin{figure}
    \resizebox{0.45\textwidth}{!}{%
        \tikzset{every picture/.style={line width=0.75pt}}       
        
        \tikzset{every picture/.style={line width=0.75pt}} 
        
        \begin{tikzpicture}[x=0.75pt,y=0.75pt,yscale=-1,xscale=1]
            
            \draw  [color={rgb, 255:red, 74; green, 144; blue, 226 }  ,draw opacity=1 ] (282.5,10) -- (359,10) -- (359,82.5) -- (282.5,82.5) -- cycle ;
            \draw  [color={rgb, 255:red, 0; green, 0; blue, 0 }  ,draw opacity=1 ] (463,99) -- (539.5,99) -- (539.5,146) -- (463,146) -- cycle ;
            \draw  [draw opacity=0] (244,95) -- (425.5,95) -- (425.5,255.5) -- (244,255.5) -- cycle ; \draw  [color={rgb, 255:red, 0; green, 0; blue, 0 }  ,draw opacity=1 ] (244,95) -- (244,255.5)(264,95) -- (264,255.5)(284,95) -- (284,255.5)(304,95) -- (304,255.5)(324,95) -- (324,255.5)(344,95) -- (344,255.5)(364,95) -- (364,255.5)(384,95) -- (384,255.5)(404,95) -- (404,255.5)(424,95) -- (424,255.5) ; \draw  [color={rgb, 255:red, 0; green, 0; blue, 0 }  ,draw opacity=1 ] (244,95) -- (425.5,95)(244,115) -- (425.5,115)(244,135) -- (425.5,135)(244,155) -- (425.5,155)(244,175) -- (425.5,175)(244,195) -- (425.5,195)(244,215) -- (425.5,215)(244,235) -- (425.5,235)(244,255) -- (425.5,255) ; \draw  [color={rgb, 255:red, 0; green, 0; blue, 0 }  ,draw opacity=1 ]  ;
            \draw  [color={rgb, 255:red, 74; green, 144; blue, 226 }  ,draw opacity=1 ][dash pattern={on 4.5pt off 4.5pt}] (240,90) -- (310,90) -- (310,161.5) -- (240,161.5) -- cycle ;
            \draw  [color={rgb, 255:red, 74; green, 144; blue, 226 }  ,draw opacity=1 ][dash pattern={on 4.5pt off 4.5pt}] (359,187) -- (429,187) -- (429,258.5) -- (359,258.5) -- cycle ;
            \draw  [color={rgb, 255:red, 128; green, 128; blue, 128 }  ,draw opacity=1 ][fill={rgb, 255:red, 155; green, 155; blue, 155 }  ,fill opacity=1 ] (405,95.5) -- (424.5,95.5) -- (424.5,113.5) -- (405,113.5) -- cycle ;
            \draw [color={rgb, 255:red, 0; green, 0; blue, 0 }  ,draw opacity=1 ]   (310,127) .. controls (405.02,128.49) and (235.21,204.73) .. (356.65,208.45) ;
            \draw [shift={(358.5,208.5)}, rotate = 181.37] [color={rgb, 255:red, 0; green, 0; blue, 0 }  ,draw opacity=1 ][line width=0.75]    (10.93,-3.29) .. controls (6.95,-1.4) and (3.31,-0.3) .. (0,0) .. controls (3.31,0.3) and (6.95,1.4) .. (10.93,3.29)   ;
            \draw [color={rgb, 255:red, 155; green, 155; blue, 155 }  ,draw opacity=1 ]   (423.5,102.5) .. controls (435.32,99.55) and (421.91,120.85) .. (460.69,120.53) ;
            \draw [shift={(462.5,120.5)}, rotate = 178.6] [color={rgb, 255:red, 155; green, 155; blue, 155 }  ,draw opacity=1 ][line width=0.75]    (10.93,-3.29) .. controls (6.95,-1.4) and (3.31,-0.3) .. (0,0) .. controls (3.31,0.3) and (6.95,1.4) .. (10.93,3.29)   ;
            \draw [color={rgb, 255:red, 74; green, 144; blue, 226 }  ,draw opacity=1 ]   (258.5,87.5) .. controls (298.1,57.8) and (242.63,79.07) .. (280.33,50.38) ;
            \draw [shift={(281.5,49.5)}, rotate = 143.13] [color={rgb, 255:red, 74; green, 144; blue, 226 }  ,draw opacity=1 ][line width=0.75]    (10.93,-3.29) .. controls (6.95,-1.4) and (3.31,-0.3) .. (0,0) .. controls (3.31,0.3) and (6.95,1.4) .. (10.93,3.29)   ;
                
            
            \draw (278,269) node [anchor=north west][inner sep=0.75pt]   [align=left] {\textcolor[rgb]{0,0,0}{2D Conv layer}};
            
            \draw (467,105) node [anchor=north west][inner sep=0.75pt]   [align=left] {\textcolor[rgb]{0,0,0}{A pixel in }\\\textcolor[rgb]{0,0,0}{an image}};
            
            
            
            \draw (287,16) node [anchor=north west][inner sep=0.75pt]   [align=left] {\textcolor[rgb]{0.29,0.56,0.89}{2D Kernel }\\\textcolor[rgb]{0.29,0.56,0.89}{Feature}\\\textcolor[rgb]{0.29,0.56,0.89}{Detector}};
            
        \end{tikzpicture}
    }%
    \caption{1D convolution layer vs 2D convolution layer}
    \label{fig:1dcnn}
\end{figure}

\subsection{Split learning}
\label{subsec:splitlear}

It is a collaborative learning technique~\cite{gupta2018distributed} in which an ML model is split into two parts, the client part ($f_{C}$), comprises the first few layers of model ($1, \cdots, l$) -- $l$ being the last layer on the client-side and the server part ($f_{P}$), encompassing the remaining layers of model ($l+1, \cdots , L$) -- $L$ being the last layer on the server-side. The client and server collaborate to train the split model without having access to each other's parts. The client who owns the data uses forward propagation to train their part of the model and sends the $ATm$ from the split layer (final layer of the client-side) to the server. The server continues the forward propagation on the $ATm$ using their part of the model. After completing the forward propagation and computing the loss, the server performs the backward propagation, only returning to the client the gradients to complete the backward propagation. This process is repeated until the model converges and learns a suitable set of parameters. 

The aim of SL is to protect client privacy by allowing clients to train part of the model, and share $ATm$ (instead of their raw data) with a server running the remaining part of the model. It is also utilized to reduce the client's computational overhead by merely running a few layers rather than the entire model. 

\subsection{Function secret sharing}
\label{subsec:fss}

FSS provides a way for additively secret-sharing a function $f$ from a given function family $F$. More concretely, a two-party FSS scheme splits a function $f: {(0,1)}^{n} \rightarrow \textit{G}$, for some abelian group $G$, into functions described by keys such that $f = \mathsf{f_{0}} + \mathsf{f}_{1}$ and every strict subset of the keys hides $f$. An FSS scheme for some class $F$ has two algorithms ($\mathsf{KeyGen, EvalAll}$)~\cite{boyle2015function}:
\begin{itemize}[leftmargin=0.6cm]
	\item $\mathsf{KeyGen(1}^{\lambda}, f) \rightarrow (\mathsf{f_{0}, f_{1}})$: 
	The $\mathsf{KeyGen}$ algorithm that takes the security parameter $1^{\lambda}$ and some secret function $f$ to be shared and outputs two different function shares $(\mathsf{f_{0}, f_{1}})$ (also called function keys).
	\item $\mathsf{EvalAll}(j, \mathsf{f}_{j}, x_{pub}) \rightarrow \mathsf{f}_{j}(x_{pub})$: The $\mathsf{EvalAll}$ algorithm takes three parameters, the input bit $j \in \{0,1\}$, the key $\mathsf{f}_j$, and the public data $x_{pub}$ and outputs the shares $\mathsf{f}_{j}(x_{pub})$. 
\end{itemize}

\smallskip

\noindent An FSS scheme should satisfy the following two properties:
\begin{itemize}[leftmargin=0.6cm]
	\item \underline{Secrecy}: A single key $(\mathsf{f}_{j})$ hides the original function $f$. 
	\item \underline{Correctness}: Adding the local shares get the same output as the original function $(\mathsf{f_{0}}(x_{pub}) + \mathsf{f_{1}}(x_{pub}) = f(x_{pub}))$.
\end{itemize}
As can be seen in \autoref{fig:fss}, function $f$ is first split into two shares $\mathsf{f_{0}}$ and $\mathsf{f_{1}}$ and then each party locally evaluates the function on the input $x_{pub}$. In FSS, adding the locally evaluated functions $(\mathsf{f_{0}}(x_{pub}) + \mathsf{f_{1}}(x_{pub}))$, gives the additive reconstruction of the original function applied on the input $f(x_{pub})$.

\begin{figure}[!ht]
	\centering
	\tikzset{every picture/.style={line width=0.75pt}} 
	
	\begin{tikzpicture}[x=0.75pt,y=0.75pt,yscale=-1,xscale=1]

		\draw    (156,137) -- (202.18,115.83) ;
		\draw [shift={(204,115)}, rotate = 155.38] [color={rgb, 255:red, 0; green, 0; blue, 0 }  ][line width=0.75]    (10.93,-3.29) .. controls (6.95,-1.4) and (3.31,-0.3) .. (0,0) .. controls (3.31,0.3) and (6.95,1.4) .. (10.93,3.29)   ;

		\draw    (154,151) -- (197.22,173.09) ;
		\draw [shift={(199,174)}, rotate = 207.07] [color={rgb, 255:red, 0; green, 0; blue, 0 }  ][line width=0.75]    (10.93,-3.29) .. controls (6.95,-1.4) and (3.31,-0.3) .. (0,0) .. controls (3.31,0.3) and (6.95,1.4) .. (10.93,3.29)   ;
		\draw    (229,113) -- (279,113) ;
		\draw [shift={(281,113)}, rotate = 180] [color={rgb, 255:red, 0; green, 0; blue, 0 }  ][line width=0.75]    (10.93,-3.29) .. controls (6.95,-1.4) and (3.31,-0.3) .. (0,0) .. controls (3.31,0.3) and (6.95,1.4) .. (10.93,3.29)   ;
		\draw    (229,173) -- (279,173) ;
		\draw [shift={(281,173)}, rotate = 180] [color={rgb, 255:red, 0; green, 0; blue, 0 }  ][line width=0.75]    (10.93,-3.29) .. controls (6.95,-1.4) and (3.31,-0.3) .. (0,0) .. controls (3.31,0.3) and (6.95,1.4) .. (10.93,3.29)   ;
		\draw    (335,117) -- (367.16,135.21) ;
		\draw [shift={(369,135.21)}, rotate = 203.36] [color={rgb, 255:red, 0; green, 0; blue, 0 }  ][line width=0.75]    (10.93,-3.29) .. controls (6.95,-1.4) and (3.31,-0.3) .. (0,0) .. controls (3.31,0.3) and (6.95,1.4) .. (10.93,3.29)   ;
		\draw    (335,174) -- (367.16,161.74) ;
		\draw [shift={(369,161)}, rotate = 158.2] [color={rgb, 255:red, 0; green, 0; blue, 0 }  ][line width=0.75]    (10.93,-3.29) .. controls (6.95,-1.4) and (3.31,-0.3) .. (0,0) .. controls (3.31,0.3) and (6.95,1.4) .. (10.93,3.29)   ;

		\draw (139,131.4) node [anchor=north west][inner sep=0.75pt]    {$f$};
		
		\draw (209,105.4) node [anchor=north west][inner sep=0.75pt]    {$\mathsf{f_{0}}$};
		
		\draw (207,165.4) node [anchor=north west][inner sep=0.75pt]    {$\mathsf{f_{1}}$};
		
		\draw (284,165.4) node [anchor=north west][inner sep=0.75pt]    {$\mathsf{f_{1}}(x_{pub})$};
		
		\draw (283,107.4) node [anchor=north west][inner sep=0.75pt]    {$\mathsf{f_{0}}(x_{pub})$};
		
		\draw (366,137.4) node [anchor=north west][inner sep=0.75pt]    {$f(x_{pub})$};
		
		\draw (229,137.4) node [anchor=north west][inner sep=0.75pt]    {$\forall x_{pub}$};
		
		\draw (343,138.4) node [anchor=north west][inner sep=0.75pt]    {$+$};
		
		\draw (150,102.4) node [anchor=north west][inner sep=0.75pt]    {$\mathbf{KeyGen}$};
		
		\draw (235,88.4) node [anchor=north west][inner sep=0.75pt]    {$\mathbf{EvalAll}$};

	\end{tikzpicture}
	\caption{Function secret sharing}
	\label{fig:fss}
\end{figure}
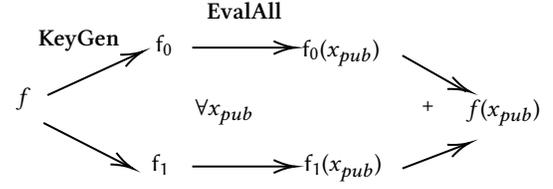

\section{Function Secret Sharing Primitives}
\label{sec:FSS_primitives}
Before going into the details of the FSS primitives, first we define all the notation used in the paper. After that, we go into the details of the FSS primitives taking into consideration ReLU as an example. 

\paragraph*{\textbf{Notation}}

\autoref{table:paranddes} summarizes the core notation of this paper. In addition to that, $f_{\theta_{c}}$ denotes the client part of the model, while $f_{\theta_{P_{0}}}$ and $f_{\theta_{P_{1}}}$, represent the split part of the remaining model associated with Server~1 ($P_{0}$) and Server~2 ($P_{1}$).

\begin{table}
    \centering
    \caption{Parameters and description in the algorithms}
    \resizebox{0.47\textwidth}{!}{%
    \label{table:paranddes}
    \begin{tabular}{l|l|l|l|l}
        \hline
        \rowcolor{gray}
        \color{white}\textbf{\#}	& \color{white}\textbf{ML Parameters} & \color{white}\textbf{Description} & \color{white}\textbf{FSS Parameters} & \color{white}\textbf{Description}\\
        \hline
        1	& $\mathbf{D}$, $m$  & Dataset, Number of data samples & $s_{j}$ & Random Seed \\ \hline
        2	& $x, y$ & Input data samples, Ground-truth labels &  $\alpha$  & Random mask\\ \hline
        3	& $\eta$ & Learning rate  &  $P_{0}$, $P_{1}$ &  Server~1 \ and \ 2\\ \hline
	4	& $p$ & Momentum & $\mathsf{f_0, f_1}$ & Function shares (function keys) \\ \hline
        5	& $n$ & Batch size  & $f_{\theta_{P}}$  & Server-side model \\ \hline
	6	& $N$ & Number of batches to be trained & $f_{\theta_{C}}$   & Client-side model \\ \hline
        7	& $E$ & Number of training epochs   &   $\tilde{f}$ & Encoder \\ \hline
        8	& $f^{i}$ & Linear or non-linear operation of layer $i$ & $\tilde{f}^{-1}$ & Decoder  \\ \hline
        9	& $ATm^{i}$ & Output activation map of $g^{i}$& $\mathsf{f^{ReLU}_{j}}$  & ReLU FSS key for server $j$  \\ \hline
        10	& $x_{pub}$ & Public input & $\mathsf{f^{FC}_{j}}$  & FC part for server $j$ \\ \hline
    \end{tabular}
    }
\end{table}

\smallskip

As mentioned earlier, the ReLU function returns~0 for negative inputs and keeps positive value as is. To create shares of the ReLU function, we have employed two FSS tests that fit our needs: \textit{(1)} the equality test and \textit{(2)} the comparison test~\cite{boyle2015function, boyle2016function, boyle2019secure, boyle2021function}. Before going into the details of these tests, let's briefly cover the concepts of the Pseudo-random Number Generator (PRNG) and Correction Word (CW).

\smallskip

\noindent\textit{Pseudo-random number generator.} \enskip 
A PRNG is an algorithm that takes a short, random seed and produces a longer sequence of pseudo-random bits. In FSS, it is used to generate the keys for the FSS protocol. Each party starts with a random seed, and uses the PRNG to generate a sequence of keys. These keys are then used to generate CW for the FSS protocol.

\smallskip

\noindent\textit{Correction word.} \enskip 
It is a special value used in FSS to correct the private shares being compared between two parties. The CW shifts a pseudo-random string so that it matches another random string on one side and remains independent on the other side. Each node in the binary decision tree generates its own CW, which is computed by applying a PRNG to the private shares being compared at that node. The resulting value is then used to correct the shares by computing an XOR operation on them with the CW. This ensures that the shares are equal if the input bits are the same, and different if the input bits are different. The CW is passed down to the child nodes of the current node in the binary decision tree, where it is used to correct the shares at those nodes as well. This process continues until the final output of the binary decision tree is computed.

\smallskip

\noindent \textit{Equality test.} \enskip
\label{para:equality}
A test that involves comparing $x_{pub}$ to a private value $\alpha$ and the goal is to determine whether $x_{pub}$ is equal to $\alpha$ or not. The random mask $\alpha$ and the private input $ATm$ are combined using modular arithmetic and published to obtain a public value $x_{pub}$. The equality test is performed using a binary decision tree of depth $n$, where $n$ is the number of bits in the input $x_{pub}$. The tree evaluates the input using $\mathsf{f_0, f_1}$, which are generated using FSS. The path from the root to $\alpha$ is called the special path. 
The two servers ($P_{0}, P_{1}$) 
start from the root and update their state depending on the bit $x_{pub}[i]$ using a CW from $\mathsf{f_0, f_1}$. If $x_{pub}[i] == \alpha[i]$, they stay on the special path and output~1, otherwise, they output~0. 
The main idea is that the servers on the special path have specific states $(s_{0}, t_{0})$ and $(s_{1}, t_{1})$, where $s_{0}$ and $s_{1}$ are independent and identically distributed and $t_{0} \oplus t_{1} = 1$ (see the dotted lines in \autoref{fig:fssequal}). When the servers are out of the special path, the CW ensures that ${s_{0} = s_{1}}$ but remain random and indistinguishable, and ${t_{0} = t_{1}}$, guaranteeing ${t_{0} \oplus t_{1} = 0}$. To reconstruct the result, each server outputs its value $t_{j}$, and the final result is obtained by ${t_{0} \oplus t_{1}}$.

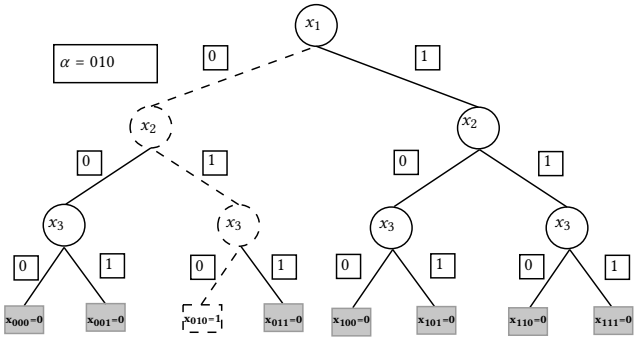
\begin{figure}
\centering
\scalebox{0.75}{
\tikzset{every picture/.style={line width=0.75pt}} 

\begin{tikzpicture}[x=0.75pt,y=0.75pt,yscale=-1,xscale=1]
\draw   (291.99,55.99) .. controls (291.99,48.26) and (298.26,41.99) .. (306,41.99) .. controls (313.74,41.99) and (320.01,48.26) .. (320.01,55.99) .. controls (320.01,63.73) and (313.74,70) .. (306,70) .. controls (298.26,70) and (291.99,63.73) .. (291.99,55.99) -- cycle ;

\draw [line width=0.75]  [dash pattern={on 4.5pt off 4.5pt}]  (194.8,110.4) -- (306,70) ;
\draw    (415,111) -- (306,70) ;
\draw  [dash pattern={on 4.5pt off 4.5pt}][line width=0.75]  (180.79,124.41) .. controls (180.79,116.67) and (187.06,110.4) .. (194.8,110.4) .. controls (202.54,110.4) and (208.81,116.67) .. (208.81,124.41) .. controls (208.81,132.14) and (202.54,138.41) .. (194.8,138.41) .. controls (187.06,138.41) and (180.79,132.14) .. (180.79,124.41) -- cycle ;
\draw   (400.99,125.01) .. controls (400.99,117.27) and (407.26,111) .. (415,111) .. controls (422.74,111) and (429.01,117.27) .. (429.01,125.01) .. controls (429.01,132.74) and (422.74,139.01) .. (415,139.01) .. controls (407.26,139.01) and (400.99,132.74) .. (400.99,125.01) -- cycle ;
\draw    (137.58,176.27) -- (194.8,138.41) ;
\draw [line width=0.75]  [dash pattern={on 4.5pt off 4.5pt}]  (253.8,176.6) -- (194.8,138.41) ;
\draw   (122.39,190.41) .. controls (122.39,182.67) and (128.66,176.4) .. (136.4,176.4) .. controls (144.14,176.4) and (150.41,182.67) .. (150.41,190.41) .. controls (150.41,198.14) and (144.14,204.41) .. (136.4,204.41) .. controls (128.66,204.41) and (122.39,198.14) .. (122.39,190.41) -- cycle ;
\draw  [dash pattern={on 4.5pt off 4.5pt}][line width=0.75]  (239.79,190.61) .. controls (239.79,182.87) and (246.06,176.6) .. (253.8,176.6) .. controls (261.54,176.6) and (267.81,182.87) .. (267.81,190.61) .. controls (267.81,198.34) and (261.54,204.61) .. (253.8,204.61) .. controls (246.06,204.61) and (239.79,198.34) .. (239.79,190.61) -- cycle ;
\draw    (358.38,177.87) -- (415.8,139.6) ;
\draw    (474.6,178.2) -- (415.8,139.6) ;
\draw   (342.39,192.21) .. controls (342.39,184.47) and (348.66,178.2) .. (356.4,178.2) .. controls (364.14,178.2) and (370.41,184.47) .. (370.41,192.21) .. controls (370.41,199.94) and (364.14,206.21) .. (356.4,206.21) .. controls (348.66,206.21) and (342.39,199.94) .. (342.39,192.21) -- cycle ;
\draw   (460.59,192.21) .. controls (460.59,184.47) and (466.86,178.2) .. (474.6,178.2) .. controls (482.34,178.2) and (488.61,184.47) .. (488.61,192.21) .. controls (488.61,199.94) and (482.34,206.21) .. (474.6,206.21) .. controls (466.86,206.21) and (460.59,199.94) .. (460.59,192.21) -- cycle ;
\draw    (109.53,244.59) -- (136.4,205.41) ;
\draw    (164.4,243.6) -- (136.4,205.41) ;
\draw [line width=0.75]  [dash pattern={on 4.5pt off 4.5pt}]  (228.53,243.97) -- (255.4,204.8) ;
\draw    (283.4,242.99) -- (255.4,204.8) ;
\draw    (330.53,245.97) -- (357.4,206.8) ;
\draw    (385.4,244.99) -- (357.4,206.8) ;
\draw    (449.53,245.97) -- (476.4,206.8) ;
\draw    (504.4,244.99) -- (476.4,206.8) ;
\draw  [color={rgb, 255:red, 155; green, 155; blue, 155 }  ,draw opacity=1 ][fill={rgb, 255:red, 155; green, 155; blue, 155 }  ,fill opacity=0.56 ] (96.6,244.8) -- (122.57,244.8) -- (122.57,263.29) -- (96.6,263.29) -- cycle ;
\draw  [dash pattern={on 4.5pt off 4.5pt}][line width=0.75]  (216.4,244) -- (242.37,244) -- (242.37,262.49) -- (216.4,262.49) -- cycle ;
\draw  [color={rgb, 255:red, 155; green, 155; blue, 155 }  ,draw opacity=1 ][fill={rgb, 255:red, 155; green, 155; blue, 155 }  ,fill opacity=0.56 ] (151.4,243.2) -- (177.37,243.2) -- (177.37,261.69) -- (151.4,261.69) -- cycle ;
\draw  [color={rgb, 255:red, 155; green, 155; blue, 155 }  ,draw opacity=1 ][fill={rgb, 255:red, 155; green, 155; blue, 155 }  ,fill opacity=0.56 ] (271,243.4) -- (296.97,243.4) -- (296.97,261.89) -- (271,261.89) -- cycle ;

\draw  [color={rgb, 255:red, 155; green, 155; blue, 155 }  ,draw opacity=1 ][fill={rgb, 255:red, 155; green, 155; blue, 155 }  ,fill opacity=0.56 ] (316.6,246.4) -- (342.57,246.4) -- (342.57,264.89) -- (316.6,264.89) -- cycle ;

\draw  [color={rgb, 255:red, 155; green, 155; blue, 155 }  ,draw opacity=1 ][fill={rgb, 255:red, 155; green, 155; blue, 155 }  ,fill opacity=0.56 ] (373.8,245.2) -- (399.77,245.2) -- (399.77,263.69) -- (373.8,263.69) -- cycle ;

\draw  [color={rgb, 255:red, 155; green, 155; blue, 155 }  ,draw opacity=1 ][fill={rgb, 255:red, 155; green, 155; blue, 155 }  ,fill opacity=0.56 ] (435.4,246) -- (461.37,246) -- (461.37,264.49) -- (435.4,264.49) -- cycle ;

\draw  [color={rgb, 255:red, 155; green, 155; blue, 155 }  ,draw opacity=1 ][fill={rgb, 255:red, 155; green, 155; blue, 155 }  ,fill opacity=0.56 ] (492.8,245.4) -- (518.77,245.4) -- (518.77,263.89) -- (492.8,263.89) -- cycle ;

\draw   (145.4,140.49) -- (161.57,140.49) -- (161.57,156.65) -- (145.4,156.65) -- cycle ;
\draw   (230.2,69.69) -- (246.37,69.69) -- (246.37,85.85) -- (230.2,85.85) -- cycle ;
\draw   (102.8,211.09) -- (118.97,211.09) -- (118.97,227.25) -- (102.8,227.25) -- cycle ;

\draw   (221.6,210.69) -- (237.77,210.69) -- (237.77,226.85) -- (221.6,226.85) -- cycle ;

\draw   (319.6,210.09) -- (335.77,210.09) -- (335.77,226.25) -- (319.6,226.25) -- cycle ;

\draw   (442.8,210.29) -- (458.97,210.29) -- (458.97,226.45) -- (442.8,226.45) -- cycle ;

\draw   (358.4,140.09) -- (374.57,140.09) -- (374.57,156.25) -- (358.4,156.25) -- cycle ;

\draw   (372.8,69.69) -- (388.97,69.69) -- (388.97,85.85) -- (372.8,85.85) -- cycle ;

\draw   (276.4,210.49) -- (292.57,210.49) -- (292.57,226.65) -- (276.4,226.65) -- cycle ;

\draw   (383,209.69) -- (399.17,209.69) -- (399.17,225.85) -- (383,225.85) -- cycle ;

\draw   (500.4,210.69) -- (516.57,210.69) -- (516.57,226.85) -- (500.4,226.85) -- cycle ;

\draw   (229.8,140.49) -- (245.97,140.49) -- (245.97,156.65) -- (229.8,156.65) -- cycle ;

\draw   (455.6,140.89) -- (471.77,140.89) -- (471.77,157.05) -- (455.6,157.05) -- cycle ;

\draw   (160.8,209.69) -- (176.97,209.69) -- (176.97,225.85) -- (160.8,225.85) -- cycle ;

\draw   (129.45,68.95) -- (199.05,68.95) -- (199.05,94.49) -- (129.45,94.49) -- cycle ;

\draw (186.4,119.8) node [anchor=north west][inner sep=0.75pt]   [align=left] {$\displaystyle x_{2}$};

\draw (297.2,50.6) node [anchor=north west][inner sep=0.75pt]   [align=left] {$\displaystyle x_{1}$};

    \draw (402.4,116.6) node [anchor=north west][inner sep=0.75pt]   [align=left] {$\displaystyle x_{2}$};

\draw (124.4,185) node [anchor=north west][inner sep=0.75pt]   [align=left] {$\displaystyle x_{3}$};

\draw (243.8,185.2) node [anchor=north west][inner sep=0.75pt]   [align=left] {$\displaystyle x_{3}$};

\draw (346.8,185.8) node [anchor=north west][inner sep=0.75pt]   [align=left] {$\displaystyle x_{3}$};

\draw (465,185.2) node [anchor=north west][inner sep=0.75pt]   [align=left] {$\displaystyle x_{3}$};

\draw (231.8,71.2) node [anchor=north west][inner sep=0.75pt]   [align=left] {0};

\draw (215.4,249) node [anchor=north west][inner sep=0.75pt]   [align=left] {\textbf{{\scriptsize x\textsubscript{010}=1}}};

\draw (132,75.2) node [anchor=north west][inner sep=0.75pt]   [align=left] {$\displaystyle \alpha =010$};

\draw (162.4,211.2) node [anchor=north west][inner sep=0.75pt]   [align=left] {1};

\draw (457.2,141.4) node [anchor=north west][inner sep=0.75pt]   [align=left] {1};

\draw (231.4,141) node [anchor=north west][inner sep=0.75pt]   [align=left] {1};

\draw (502,212.2) node [anchor=north west][inner sep=0.75pt]   [align=left] {1};

\draw (384.6,211.2) node [anchor=north west][inner sep=0.75pt]   [align=left] {1};

\draw (278,212) node [anchor=north west][inner sep=0.75pt]   [align=left] {1};

\draw (374.4,71.2) node [anchor=north west][inner sep=0.75pt]   [align=left] {1};

\draw (360,141.6) node [anchor=north west][inner sep=0.75pt]   [align=left] {0};

\draw (444.4,211.8) node [anchor=north west][inner sep=0.75pt]   [align=left] {0};

\draw (321.2,211.6) node [anchor=north west][inner sep=0.75pt]   [align=left] {0};

\draw (223.2,212.2) node [anchor=north west][inner sep=0.75pt]   [align=left] {0};

\draw (104.4,212.6) node [anchor=north west][inner sep=0.75pt]   [align=left] {0};

\draw (147,142) node [anchor=north west][inner sep=0.75pt]   [align=left] {0};

\draw (491.8,250) node [anchor=north west][inner sep=0.75pt]   [align=left] {\textbf{{\scriptsize x\textsubscript{111}=0}}};

\draw (434.4,250) node [anchor=north west][inner sep=0.75pt]   [align=left] {\textbf{{\scriptsize x\textsubscript{110}=0}}};

\draw (372.8,250) node [anchor=north west][inner sep=0.75pt]   [align=left] {\textbf{{\scriptsize x\textsubscript{101}=0}}};

\draw (315.6,250) node [anchor=north west][inner sep=0.75pt]   [align=left] {\textbf{{\scriptsize x\textsubscript{100}=0}}};

\draw (270,250) node [anchor=north west][inner sep=0.75pt]   [align=left] {\textbf{{\scriptsize x\textsubscript{011}=0}}};

\draw (150.4,250) node [anchor=north west][inner sep=0.75pt]   [align=left] {\textbf{{\scriptsize x\textsubscript{001}=0}}};

\draw (95.6,250) node [anchor=north west][inner sep=0.75pt]   [align=left] {\textbf{{\scriptsize x\textsubscript{000}=0}}};

\end{tikzpicture}
}
 \caption{Function secret sharing for equality test}
	\label{fig:fssequal}
\end{figure}

\smallskip

\noindent \textit{Comparison test.} \enskip
\label{para:comparison}
The comparison test involves a binary decision tree with a depth of $n$. In this test, the path from the root down to the greater value is called the special path. 
Each server starts at the root of the binary decision tree and updates their state based on the bits using CW from $\mathsf{f_{0}}$ and $\mathsf{f_{1}}$. The goal is to evaluate all the paths simultaneously. When a server deviates from the special path, they either fall on the left side for $x_{pub} < \alpha$ or on the right side for $x_{pub} > \alpha$ (see \autoref{fig:fsscom}). Hence, at each node, an additional step is taken where a leaf label is output depending on the bit value $x_{pub}[i]$. The label is~1 only if $x_{pub}[i] < \alpha[i]$ and all previous bits are same. Only one label, either the final label (which corresponds to $x_{pub} = \alpha$) or the leaf label can be equal to~1, because only one path is taken. Therefore, the servers return the sum of all the labels to obtain the final output.

\begin{figure}
\centering
\scalebox{0.75}{
\tikzset{every picture/.style={line width=0.75pt}} 
\begin{tikzpicture}[x=0.75pt,y=0.75pt,yscale=-1,xscale=1]

\draw   (291.99,55.99) .. controls (291.99,48.26) and (298.26,41.99) .. (306,41.99) .. controls (313.74,41.99) and (320.01,48.26) .. (320.01,55.99) .. controls (320.01,63.73) and (313.74,70) .. (306,70) .. controls (298.26,70) and (291.99,63.73) .. (291.99,55.99) -- cycle ;
\draw  [dash pattern={on 4.5pt off 4.5pt}]  (194.8,110.4) -- (306,70) ;
\draw    (415,111) -- (306,70) ;
\draw  [dash pattern={on 4.5pt off 4.5pt}] (180.79,124.41) .. controls (180.79,116.67) and (187.06,110.4) .. (194.8,110.4) .. controls (202.54,110.4) and (208.81,116.67) .. (208.81,124.41) .. controls (208.81,132.14) and (202.54,138.41) .. (194.8,138.41) .. controls (187.06,138.41) and (180.79,132.14) .. (180.79,124.41) -- cycle ;
\draw   (400.99,125.01) .. controls (400.99,117.27) and (407.26,111) .. (415,111) .. controls (422.74,111) and (429.01,117.27) .. (429.01,125.01) .. controls (429.01,132.74) and (422.74,139.01) .. (415,139.01) .. controls (407.26,139.01) and (400.99,132.74) .. (400.99,125.01) -- cycle ;
\draw  [dash pattern={on 4.5pt off 4.5pt}]  (137.58,176.27) -- (194.8,138.41) ;
\draw  [dash pattern={on 4.5pt off 4.5pt}]  (253.8,176.6) -- (194.8,138.41) ;
\draw  [dash pattern={on 4.5pt off 4.5pt}] (122.39,190.41) .. controls (122.39,182.67) and (128.66,176.4) .. (136.4,176.4) .. controls (144.14,176.4) and (150.41,182.67) .. (150.41,190.41) .. controls (150.41,198.14) and (144.14,204.41) .. (136.4,204.41) .. controls (128.66,204.41) and (122.39,198.14) .. (122.39,190.41) -- cycle ;
\draw  [dash pattern={on 4.5pt off 4.5pt}] (239.79,190.61) .. controls (239.79,182.87) and (246.06,176.6) .. (253.8,176.6) .. controls (261.54,176.6) and (267.81,182.87) .. (267.81,190.61) .. controls (267.81,198.34) and (261.54,204.61) .. (253.8,204.61) .. controls (246.06,204.61) and (239.79,198.34) .. (239.79,190.61) -- cycle ;
\draw    (358.38,177.87) -- (415.8,139.6) ;
\draw    (474.6,178.2) -- (415.8,139.6) ;
\draw   (342.39,192.21) .. controls (342.39,184.47) and (348.66,178.2) .. (356.4,178.2) .. controls (364.14,178.2) and (370.41,184.47) .. (370.41,192.21) .. controls (370.41,199.94) and (364.14,206.21) .. (356.4,206.21) .. controls (348.66,206.21) and (342.39,199.94) .. (342.39,192.21) -- cycle ;
\draw   (460.59,192.21) .. controls (460.59,184.47) and (466.86,178.2) .. (474.6,178.2) .. controls (482.34,178.2) and (488.61,184.47) .. (488.61,192.21) .. controls (488.61,199.94) and (482.34,206.21) .. (474.6,206.21) .. controls (466.86,206.21) and (460.59,199.94) .. (460.59,192.21) -- cycle ;
\draw  [dash pattern={on 4.5pt off 4.5pt}]  (109.53,244.59) -- (131.27,212.89) -- (136.4,205.41) ;
\draw  [dash pattern={on 4.5pt off 4.5pt}]  (164.4,243.6) -- (136.4,205.41) ;
\draw  [dash pattern={on 4.5pt off 4.5pt}]  (228.53,243.97) -- (255.4,204.8) ;
\draw    (283.4,242.99) -- (255.4,204.8) ;
\draw    (330.53,245.97) -- (357.4,206.8) ;
\draw    (385.4,244.99) -- (357.4,206.8) ;
\draw    (449.53,245.97) -- (476.4,206.8) ;
\draw    (504.4,244.99) -- (476.4,206.8) ;
\draw  [color={rgb, 255:red, 0; green, 0; blue, 0 }  ,draw opacity=1 ][fill={rgb, 255:red, 255; green, 255; blue, 255 }  ,fill opacity=1 ][dash pattern={on 4.5pt off 4.5pt}] (96.6,244.8) -- (122.57,244.8) -- (122.57,263.29) -- (96.6,263.29) -- cycle ;
\draw  [dash pattern={on 4.5pt off 4.5pt}] (216,244) -- (241.97,244) -- (241.97,262.49) -- (216,262.49) -- cycle ;
\draw  [color={rgb, 255:red, 0; green, 0; blue, 0 }  ,draw opacity=1 ][fill={rgb, 255:red, 255; green, 255; blue, 255 }  ,fill opacity=1 ][dash pattern={on 4.5pt off 4.5pt}] (152,243.2) -- (177.97,243.2) -- (177.97,261.69) -- (152,261.69) -- cycle ;
\draw  [color={rgb, 255:red, 155; green, 155; blue, 155 }  ,draw opacity=1 ][fill={rgb, 255:red, 155; green, 155; blue, 155 }  ,fill opacity=0.56 ] (271,243.4) -- (296.97,243.4) -- (296.97,261.89) -- (271,261.89) -- cycle ;
\draw  [color={rgb, 255:red, 155; green, 155; blue, 155 }  ,draw opacity=1 ][fill={rgb, 255:red, 155; green, 155; blue, 155 }  ,fill opacity=0.56 ] (316.6,246.4) -- (342.57,246.4) -- (342.57,264.89) -- (316.6,264.89) -- cycle ;
\draw  [color={rgb, 255:red, 155; green, 155; blue, 155 }  ,draw opacity=1 ][fill={rgb, 255:red, 155; green, 155; blue, 155 }  ,fill opacity=0.56 ] (373.8,245.2) -- (399.77,245.2) -- (399.77,263.69) -- (373.8,263.69) -- cycle ;
\draw  [color={rgb, 255:red, 155; green, 155; blue, 155 }  ,draw opacity=1 ][fill={rgb, 255:red, 155; green, 155; blue, 155 }  ,fill opacity=0.56 ] (435.4,246) -- (461.37,246) -- (461.37,264.49) -- (435.4,264.49) -- cycle ;
\draw  [color={rgb, 255:red, 155; green, 155; blue, 155 }  ,draw opacity=1 ][fill={rgb, 255:red, 155; green, 155; blue, 155 }  ,fill opacity=0.56 ] (492.8,245.4) -- (518.77,245.4) -- (518.77,263.89) -- (492.8,263.89) -- cycle ;

\draw  [dash pattern={on 4.5pt off 4.5pt}] (145.4,140.49) -- (161.57,140.49) -- (161.57,156.65) -- (145.4,156.65) -- cycle ;
\draw  [dash pattern={on 4.5pt off 4.5pt}] (230.2,69.69) -- (246.37,69.69) -- (246.37,85.85) -- (230.2,85.85) -- cycle ;
\draw  [dash pattern={on 4.5pt off 4.5pt}] (102.8,211.09) -- (118.97,211.09) -- (118.97,227.25) -- (102.8,227.25) -- cycle ;
\draw   (221.6,210.69) -- (237.77,210.69) -- (237.77,226.85) -- (221.6,226.85) -- cycle ;

\draw   (319.6,210.09) -- (335.77,210.09) -- (335.77,226.25) -- (319.6,226.25) -- cycle ;
\draw   (442.8,210.29) -- (458.97,210.29) -- (458.97,226.45) -- (442.8,226.45) -- cycle ;

\draw   (358.4,140.09) -- (374.57,140.09) -- (374.57,156.25) -- (358.4,156.25) -- cycle ;

\draw   (372.8,69.69) -- (388.97,69.69) -- (388.97,85.85) -- (372.8,85.85) -- cycle ;

\draw   (276.4,210.49) -- (292.57,210.49) -- (292.57,226.65) -- (276.4,226.65) -- cycle ;

\draw   (383,209.69) -- (399.17,209.69) -- (399.17,225.85) -- (383,225.85) -- cycle ;

\draw   (500.4,210.69) -- (516.57,210.69) -- (516.57,226.85) -- (500.4,226.85) -- cycle ;

\draw   (229.8,140.49) -- (245.97,140.49) -- (245.97,156.65) -- (229.8,156.65) -- cycle ;

\draw   (455.6,140.89) -- (471.77,140.89) -- (471.77,157.05) -- (455.6,157.05) -- cycle ;

\draw  [dash pattern={on 4.5pt off 4.5pt}] (160.8,209.69) -- (176.97,209.69) -- (176.97,225.85) -- (160.8,225.85) -- cycle ;
\draw   (129.45,68.95) -- (199.05,68.95) -- (199.05,94.49) -- (129.45,94.49) -- cycle ;

\draw (186.4,118.8) node [anchor=north west][inner sep=0.75pt]   [align=left] {$\displaystyle x_{2}$};

\draw (297.2,49.6) node [anchor=north west][inner sep=0.75pt]   [align=left] {$\displaystyle x_{1}$};

\draw (406.4,118.6) node [anchor=north west][inner sep=0.75pt]   [align=left] {$\displaystyle x_{2}$};

\draw (124.4,185) node [anchor=north west][inner sep=0.75pt]   [align=left] {$\displaystyle x_{3}$};

\draw (243.8,185.2) node [anchor=north west][inner sep=0.75pt]   [align=left] {$\displaystyle x_{3}$};

\draw (346.8,184.8) node [anchor=north west][inner sep=0.75pt]   [align=left] {$\displaystyle x_{3}$};

\draw (465,185.2) node [anchor=north west][inner sep=0.75pt]   [align=left] {$\displaystyle x_{3}$};

\draw (231.8,71.2) node [anchor=north west][inner sep=0.75pt]   [align=left] {0};

\draw (214.6,250) node [anchor=north west][inner sep=0.75pt]   [align=left] {\textbf{{\scriptsize x\textsubscript{010}=1}}};

\draw (147,142) node [anchor=north west][inner sep=0.75pt]   [align=left] {0};

\draw (104.4,212.6) node [anchor=north west][inner sep=0.75pt]   [align=left] {0};

\draw (162.4,211.2) node [anchor=north west][inner sep=0.75pt]   [align=left] {1};

\draw (150.4,250) node [anchor=north west][inner sep=0.75pt]   [align=left] {\textbf{{\scriptsize x\textsubscript{001}=1}}};

\draw (95.6,250) node [anchor=north west][inner sep=0.75pt]   [align=left] {\textbf{{\scriptsize x\textsubscript{000}=1}}};

\draw (132,77.2) node [anchor=north west][inner sep=0.75pt]   [align=left] {$\displaystyle \alpha =010$};

\draw (457.2,143.4) node [anchor=north west][inner sep=0.75pt]   [align=left] {1};

\draw (231.4,142) node [anchor=north west][inner sep=0.75pt]   [align=left] {1};

\draw (502,212.2) node [anchor=north west][inner sep=0.75pt]   [align=left] {1};

\draw (384.6,211.2) node [anchor=north west][inner sep=0.75pt]   [align=left] {1};

\draw (278,212) node [anchor=north west][inner sep=0.75pt]   [align=left] {1};

\draw (374.4,71.2) node [anchor=north west][inner sep=0.75pt]   [align=left] {1};

\draw (360,141.6) node [anchor=north west][inner sep=0.75pt]   [align=left] {0};

\draw (444.4,211.8) node [anchor=north west][inner sep=0.75pt]   [align=left] {0};

\draw (321.2,211.6) node [anchor=north west][inner sep=0.75pt]   [align=left] {0};

\draw (223.2,211.2) node [anchor=north west][inner sep=0.75pt]   [align=left] {0};

\draw (491.8,250) node [anchor=north west][inner sep=0.75pt]   [align=left] {\textbf{{\scriptsize x\textsubscript{111}=0}}};

\draw (434.4,250) node [anchor=north west][inner sep=0.75pt]   [align=left] {\textbf{{\scriptsize x\textsubscript{110}=0}}};

\draw (372.8,250) node [anchor=north west][inner sep=0.75pt]   [align=left] {\textbf{{\scriptsize x\textsubscript{101}=0}}};

\draw (315.6,250) node [anchor=north west][inner sep=0.75pt]   [align=left] {\textbf{{\scriptsize x\textsubscript{100}=0}}};

\draw (270,250) node [anchor=north west][inner sep=0.75pt]   [align=left] {\textbf{{\scriptsize x\textsubscript{011}=0}}};

\end{tikzpicture}

}
 \caption{Function secret sharing for comparison test}
	\label{fig:fsscom}
\end{figure}
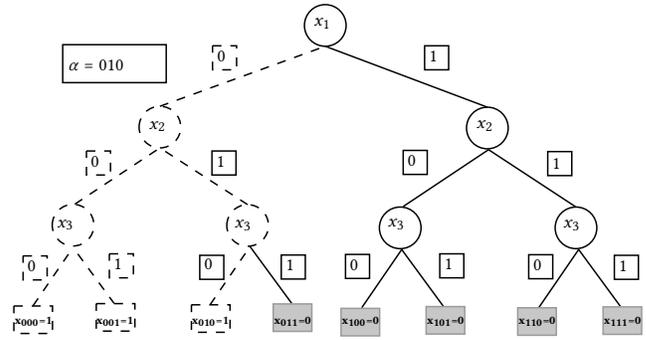

\section{Related Work} 
\label{sec:related_Work}

SL provides significant advantages in terms of reducing the computational cost of training deep neural networks. However, a potential vulnerability arises during end-to-end training, where the exchange of gradients at the cut layer could inadvertently encode private features or labels, posing a risk to data privacy. Abuadbba et al.~\cite{abuadbba2020can} proposed using SL when training and classifying 1-dimensional data on CNN models. In their work, they use two CNN models, one with two Conv2D layers and another with three Conv2D layers to see the applicability of SL on time-series data, when compared to a centralized system. They identified that SL achieves comparable accuracy to centralized models, but noted that SL by itself has multiple privacy leakages, which hamper the confidentiality of the input data. Moreover, researchers have found that SL is susceptible to different types of attacks, such FSHA~\cite{pasquini2021unleashing}. 

The leakages highlight the importance of addressing data privacy concerns when implementing SL techniques. As such, researchers have attempted combining different cryptographic techniques with SL to address these attacks. For example, Abuadbba et al.~\cite{abuadbba2020can} tried two privacy mitigation techniques -- adding more hidden layers on the client-side and using DP. However, both of these techniques 
indicate a trade-off between reducing privacy leakage and experiencing a decline in model accuracy, particularly when DP is used. Thapa \textit{et al.}~\cite{thapa2022splitfed} proposed to combine SL with FL to reduce the computational complexity of SL. To address the remaining privacy leakage, the authors made use of DP-enhanced client model training and PixelDP~\cite{lecuyer2019certified}, however faced the same drawback of greatly reduced model utility. Some other works like~\cite{khan2023split, khan2023love, khan2023more}  make use of HE to encrypt the intermediate $ATm$ of SL before outsourcing them to the server. 
This way, the model retains equivalent accuracy when compared to plaintext SL, and addresses the privacy leakages known in SL.
However, a key limitation of this method is identified during backward propagation. By analyzing the gradients sent from the client, the server gains valuable information about the client's input data, leading to potential privacy leaks. To tackle this, an improved protocol is proposed by Nguyen \textit{et al.}~\cite{nguyen2023split}, aiming to minimize privacy leakage. The authors~\cite{nguyen2023split} claim that their proposed approach not only helps maintain privacy but also accelerates the training time and reduces the communication overhead compared to~\cite{khan2023more}). Despite these advancements, the training time and communication cost in~\cite{nguyen2023split} continue to pose challenges, even for a simple model featuring two convolutional layers on the client-side (processing plaintext data) and only one fully connected layer on the server-side (processing encrypted data). Specifically, it takes a substantial amount of time and communication cost, which poses a challenge for practical implementation.

There is a plethora of works, which employ FSS in PPML, such as AriaNN~\cite{ryffel2020ariann} which is the first work that makes use of FSS for secure neural network training and inference against a semi-honest adversary. 
Pika~\cite{wagh2022pika} introduces a 3PC PPML solution against either a semi-honest or a malicious adversary. 
Another recent FSS work is LLAMA~\cite{gupta2022llama}, which implements a low-latency library for computing non-linear mathematical functions, useful to PPML, such as sigmoid, tanh, reciprocal square root, using FSS primitives. 
Orca~\cite{jawalkar2023orca} directly uses the library proposed in LLAMA~\cite{gupta2022llama} performs secure training on neural network architectures. 
SIGMA~\cite{gupta2023sigma} demonstrates how FSS can be used for secure inference of generative pre-trained transformers, which contain GeLU, Softmax,
and layer normalizations, and shows that the secure inference can be accelerated through GPUs.
Agarwal~\textit{et al.}~\cite{agarwal2022communication} designs a secure logistic regression training construction with FSS.  
FssNN~\cite{yang2023fssnn} proposes a two-party FSS framework for secure NN training and makes improvements to various aspects in prior arts. 

In all of the covered FSS works, the researchers employ FSS on the entire model which in turn increases computational complexity and communication overhead. However, to the best of our knowledge, the idea of combining SL with FSS has not been explored in the literature. 
To this end, our approach provides new insights into how the use of these techniques can improve the privacy guarantees of ML applications, solve the privacy leakage caused by SL and reduce both the communication and computation costs of FSS.

\section{Architecture}
\label{sec:architecture}
This section provides an overview of the ML model used for classification. We begin with the description of the initial non-split or local model followed by a brief explanation of a related FSS model from a previous work~\cite{ryffel2020ariann}.  Then, we explain how we convert the local model into a split model. Finally, we introduce our private vanilla SL protocl. 
Additionally, we introduce the key entities involved in the training process of the split model, namely the client and the server, elaborating on their respective roles and assigned parameters during the training phase.

\subsection{Local model without split}
\label{subsec:nosplit}

\autoref{fig:slfs} illustrates our local model, which consists of two Conv2D layers, two max-pooling layers, two FC layers and four ReLU layers. These layers operate without any split between client and server, with one party performing computations on the entire model. The training process of this model can be described as follows. 

Let $\mathbf{|D|} = \{(x_{i}, y_{i})| i = 1,2,\ldots, m\}$ be a training dataset consisting of $m$ data samples. Each data sample $x$ has a corresponding encoded label vector $y$ that represents its ground-truth class. 
Then, $f_{\theta}$ is a function representing the local model with $\theta$ being a set of adjustable parameters. 
$\theta$ is first initialized to small random values denoted as $\Phi$. The goal is to find the optimal parameters $\theta$ to map $x$ to a predicted output vector $\hat{y}$, where $\hat{y}$ is as close as possible to $y$. $\hat{y}$ 
is a vector of 
probabilities (called posteriors), and we use the 
highest probability to determine the class to which $x$ belongs. To find the closest value of $\hat{y}$ with respect to $y$, we try to minimize a loss function $J=\mathcal{L}(\hat{y}, y)$. Training the local model is an iterative process to find the best $\theta$ to minimize $J$. This process involves 
two steps: 
\textit{forward} 
and \textit{backward} propagation. The
forward propagation computes each layer in the network 
(\autoref{fig:slfs}) on 
input $x$, 
reaching the output layer to predict 
$\hat{y}$. 
Backward propagation starts from the 
output layer and goes back to the input layer to calculate the gradients of the loss function $J$ w.r.t the network's weights $\theta$. 
These weights are then updated according to the calculated gradients.  
We train the local model with numerous 
samples of $x$ and corresponding $y$, through many iterations of forward and backward propagation. We do not train the network on each single data example, but use a number of them at a time (using batch size $n$). The total number of training batches is $N = \frac{\mathbf{|D|}}{n}$ , where $\mathbf{|D|}$ is the size of the dataset. Once the model goes through all the training batches, it has completed one training $E$ and this process continues 
for a total of $E$ epochs. 
We implement and reproduce the results for this model. The best test accuracy that we obtained for this model is~99.29\%.

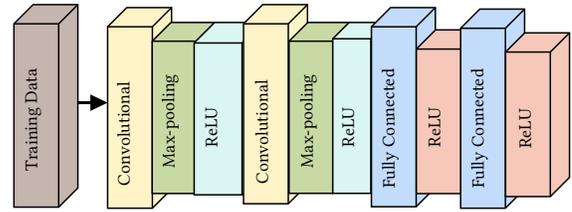
\begin{figure}
	\centering
\tikzset{every picture/.style={line width=0.75pt}}
\scalebox{0.75}{

\tikzset{every picture/.style={line width=0.75pt}} 

\begin{tikzpicture}[x=0.75pt,y=0.75pt,yscale=-1,xscale=1]

\draw  [fill={rgb, 255:red, 255; green, 244; blue, 199 }  ,fill opacity=1 ] (182,62.62) -- (194.9,49.72) -- (225,49.72) -- (225,171.82) -- (212.1,184.72) -- (182,184.72) -- cycle ; \draw   (225,49.72) -- (212.1,62.62) -- (182,62.62) ; \draw   (212.1,62.62) -- (212.1,184.72) ;
\draw  [fill={rgb, 255:red, 200; green, 218; blue, 164 }  ,fill opacity=1 ] (212,72.32) -- (224.6,59.72) -- (254,59.72) -- (254,162.22) -- (241.4,174.82) -- (212,174.82) -- cycle ; \draw   (254,59.72) -- (241.4,72.32) -- (212,72.32) ; \draw   (241.4,72.32) -- (241.4,174.82) ;
\draw  [fill={rgb, 255:red, 218; green, 246; blue, 242 }  ,fill opacity=1 ] (240.2,73.52) -- (254,59.72) -- (286.2,59.72) -- (286.2,161.92) -- (272.4,175.72) -- (240.2,175.72) -- cycle ; \draw   (286.2,59.72) -- (272.4,73.52) -- (240.2,73.52) ; \draw   (272.4,73.52) -- (272.4,175.72) ;
\draw  [fill={rgb, 255:red, 197; green, 181; blue, 175 }  ,fill opacity=1 ] (119,60.62) -- (131.9,47.72) -- (162,47.72) -- (162,171.1) -- (149.1,184) -- (119,184) -- cycle ; \draw   (162,47.72) -- (149.1,60.62) -- (119,60.62) ; \draw   (149.1,60.62) -- (149.1,184) ;
\draw  [fill={rgb, 255:red, 255; green, 244; blue, 199 }  ,fill opacity=1 ] (273.4,60.62) -- (286.3,47.72) -- (316.4,47.72) -- (316.4,168.82) -- (303.5,181.72) -- (273.4,181.72) -- cycle ; \draw   (316.4,47.72) -- (303.5,60.62) -- (273.4,60.62) ; \draw   (303.5,60.62) -- (303.5,181.72) ;
\draw  [fill={rgb, 255:red, 200; green, 218; blue, 164 }  ,fill opacity=1 ] (304,72.32) -- (316.6,59.72) -- (346,59.72) -- (346,162.12) -- (333.4,174.72) -- (304,174.72) -- cycle ; \draw   (346,59.72) -- (333.4,72.32) -- (304,72.32) ; \draw   (333.4,72.32) -- (333.4,174.72) ;
\draw  [fill={rgb, 255:red, 218; green, 246; blue, 242 }  ,fill opacity=1 ] (333.4,70.7) -- (344.38,59.72) -- (370,59.72) -- (370,163.74) -- (359.02,174.72) -- (333.4,174.72) -- cycle ; \draw   (370,59.72) -- (359.02,70.7) -- (333.4,70.7) ; \draw   (359.02,70.7) -- (359.02,174.72) ;
\draw [line width=1.5]    (162,113.72) -- (178,113.72) ;
\draw [shift={(182,113.72)}, rotate = 180] [fill={rgb, 255:red, 0; green, 0; blue, 0 }  ][line width=0.08]  [draw opacity=0] (11.61,-5.58) -- (0,0) -- (11.61,5.58) -- cycle    ;
\draw  [fill={rgb, 255:red, 195; green, 220; blue, 252 }  ,fill opacity=1 ] (359.4,62.62) -- (372.3,49.72) -- (402.4,49.72) -- (402.4,170.82) -- (389.5,183.72) -- (359.4,183.72) -- cycle ; \draw   (402.4,49.72) -- (389.5,62.62) -- (359.4,62.62) ; \draw   (389.5,62.62) -- (389.5,183.72) ;

\draw  [fill={rgb, 255:red, 246; green, 200; blue, 185 }  ,fill opacity=1 ] (389.6,77.62) -- (402.5,64.72) -- (432.6,64.72) -- (432.6,161.99) -- (419.7,174.89) -- (389.6,174.89) -- cycle ; \draw   (432.6,64.72) -- (419.7,77.62) -- (389.6,77.62) ; \draw   (419.7,77.62) -- (419.7,174.89) ;
\draw  [fill={rgb, 255:red, 195; green, 220; blue, 252 }  ,fill opacity=1 ] (419.4,63.62) -- (432.3,50.72) -- (462.4,50.72) -- (462.4,171.82) -- (449.5,184.72) -- (419.4,184.72) -- cycle ; \draw   (462.4,50.72) -- (449.5,63.62) -- (419.4,63.62) ; \draw   (449.5,63.62) -- (449.5,184.72) ;

\draw  [fill={rgb, 255:red, 246; green, 200; blue, 185 }  ,fill opacity=1 ] (449.6,79.62) -- (462.5,66.72) -- (492.6,66.72) -- (492.6,163.99) -- (479.7,176.89) -- (449.6,176.89) -- cycle ; \draw   (492.6,66.72) -- (479.7,79.62) -- (449.6,79.62) ; \draw   (479.7,79.62) -- (479.7,176.89) ;

\draw (187.42,182.8) node [anchor=north west][inner sep=0.75pt]  [rotate=-269.64] [align=left] {\ \ \ \ Convolutional};

\draw (280.42,181.8) node [anchor=north west][inner sep=0.75pt]  [rotate=-269.64] [align=left] {\ \ \ \ Convolutional};

\draw (217.42,158.8) node [anchor=north west][inner sep=0.75pt]  [rotate=-269.64] [align=left] {Max-pooling};

\draw (310.42,158.8) node [anchor=north west][inner sep=0.75pt]  [rotate=-269.64] [align=left] {Max-pooling};

\draw (246.42,144.99) node [anchor=north west][inner sep=0.75pt]  [rotate=-269.64] [align=left] {ReLU};

\draw (338.42,144.99) node [anchor=north west][inner sep=0.75pt]  [rotate=-269.64] [align=left] {ReLU};

\draw (125.42,160.79) node [anchor=north west][inner sep=0.75pt]  [rotate=-269.64] [align=left] {Training Data};

\draw (396.42,144.99) node [anchor=north west][inner sep=0.75pt]  [rotate=-269.64] [align=left] {ReLU};

\draw (457.42,145.99) node [anchor=north west][inner sep=0.75pt]  [rotate=-269.64] [align=left] {ReLU};

\draw (425.42,172.8) node [anchor=north west][inner sep=0.75pt]  [rotate=-269.64] [align=left] {Fully Connected};

\draw (365.42,172.8) node [anchor=north west][inner sep=0.75pt]  [rotate=-269.64] [align=left] {Fully Connected};

\end{tikzpicture}

}
	\caption{Local model without split learning}
	\label{fig:slfs}
\end{figure}

\subsection{Local model with vanilla split}
\label{subsec:lvsplit}
In this protocol, we split the local model shown in \autoref{fig:slfs} between the client and the server in such a way that the starting layers, Conv2D, max-pooling and the first two ReLU layers are executed on client-side while two FC layers and the last two ReLU layers are executed on the server-side. As can be seen in \autoref{fig:localsl}, the training input data is residing on the client-side while the final prediction of the model is on the server-side.  

\begin{figure}
	\centering
\tikzset{every picture/.style={line width=0.75pt}} 
\scalebox{0.65}{
\begin{tikzpicture}[x=0.75pt,y=0.75pt,yscale=-1,xscale=1]

\draw  [fill={rgb, 255:red, 255; green, 244; blue, 199 }  ,fill opacity=1 ] (69,131.62) -- (81.9,118.72) -- (112,118.72) -- (112,240.82) -- (99.1,253.72) -- (69,253.72) -- cycle ; \draw   (112,118.72) -- (99.1,131.62) -- (69,131.62) ; \draw   (99.1,131.62) -- (99.1,253.72) ;
\draw  [fill={rgb, 255:red, 200; green, 218; blue, 164 }  ,fill opacity=1 ] (99,141.32) -- (111.6,128.72) -- (141,128.72) -- (141,231.22) -- (128.4,243.82) -- (99,243.82) -- cycle ; \draw   (141,128.72) -- (128.4,141.32) -- (99,141.32) ; \draw   (128.4,141.32) -- (128.4,243.82) ;
\draw  [fill={rgb, 255:red, 218; green, 246; blue, 242 }  ,fill opacity=1 ] (127.2,142.52) -- (141,128.72) -- (173.2,128.72) -- (173.2,230.92) -- (159.4,244.72) -- (127.2,244.72) -- cycle ; \draw   (173.2,128.72) -- (159.4,142.52) -- (127.2,142.52) ; \draw   (159.4,142.52) -- (159.4,244.72) ;
\draw  [fill={rgb, 255:red, 197; green, 181; blue, 175 }  ,fill opacity=1 ] (6,129.62) -- (18.9,116.72) -- (49,116.72) -- (49,239.1) -- (36.1,252) -- (6,252) -- cycle ; \draw   (49,116.72) -- (36.1,129.62) -- (6,129.62) ; \draw   (36.1,129.62) -- (36.1,252) ;
\draw  [fill={rgb, 255:red, 255; green, 244; blue, 199 }  ,fill opacity=1 ] (159.4,129.62) -- (172.3,116.72) -- (202.4,116.72) -- (202.4,237.82) -- (189.5,250.72) -- (159.4,250.72) -- cycle ; \draw   (202.4,116.72) -- (189.5,129.62) -- (159.4,129.62) ; \draw   (189.5,129.62) -- (189.5,250.72) ;
\draw  [fill={rgb, 255:red, 200; green, 218; blue, 164 }  ,fill opacity=1 ] (189,141.32) -- (201.6,128.72) -- (231,128.72) -- (231,231.12) -- (218.4,243.72) -- (189,243.72) -- cycle ; \draw   (231,128.72) -- (218.4,141.32) -- (189,141.32) ; \draw   (218.4,141.32) -- (218.4,243.72) ;
\draw  [fill={rgb, 255:red, 218; green, 246; blue, 242 }  ,fill opacity=1 ] (218.4,139.7) -- (229.38,128.72) -- (255,128.72) -- (255,232.74) -- (244.02,243.72) -- (218.4,243.72) -- cycle ; \draw   (255,128.72) -- (244.02,139.7) -- (218.4,139.7) ; \draw   (244.02,139.7) -- (244.02,243.72) ;
\draw [line width=1.5]    (49,182.72) -- (65,182.72) ;
\draw [shift={(69,182.72)}, rotate = 180] [fill={rgb, 255:red, 0; green, 0; blue, 0 }  ][line width=0.08]  [draw opacity=0] (11.61,-5.58) -- (0,0) -- (11.61,5.58) -- cycle    ;
\draw  [dash pattern={on 0.84pt off 2.51pt}] (-1,110) -- (258,110) -- (258,264) -- (-1,264) -- cycle ;
\draw  [fill={rgb, 255:red, 195; green, 220; blue, 252 }  ,fill opacity=1 ] (318.4,128.62) -- (331.3,115.72) -- (361.4,115.72) -- (361.4,236.82) -- (348.5,249.72) -- (318.4,249.72) -- cycle ; \draw   (361.4,115.72) -- (348.5,128.62) -- (318.4,128.62) ; \draw   (348.5,128.62) -- (348.5,249.72) ;

\draw  [fill={rgb, 255:red, 246; green, 200; blue, 185 }  ,fill opacity=1 ] (348.6,143.62) -- (361.5,130.72) -- (391.6,130.72) -- (391.6,227.99) -- (378.7,240.89) -- (348.6,240.89) -- cycle ; \draw   (391.6,130.72) -- (378.7,143.62) -- (348.6,143.62) ; \draw   (378.7,143.62) -- (378.7,240.89) ;
\draw  [fill={rgb, 255:red, 195; green, 220; blue, 252 }  ,fill opacity=1 ] (378.4,129.62) -- (391.3,116.72) -- (421.4,116.72) -- (421.4,237.82) -- (408.5,250.72) -- (378.4,250.72) -- cycle ; \draw   (421.4,116.72) -- (408.5,129.62) -- (378.4,129.62) ; \draw   (408.5,129.62) -- (408.5,250.72) ;

\draw  [fill={rgb, 255:red, 246; green, 200; blue, 185 }  ,fill opacity=1 ] (408.6,145.62) -- (421.5,132.72) -- (451.6,132.72) -- (451.6,229.99) -- (438.7,242.89) -- (408.6,242.89) -- cycle ; \draw   (451.6,132.72) -- (438.7,145.62) -- (408.6,145.62) ; \draw   (438.7,145.62) -- (438.7,242.89) ;
\draw  [dash pattern={on 0.84pt off 2.51pt}] (308.3,109) -- (464.3,109) -- (464.3,263) -- (308.3,263) -- cycle ;
\draw    (305.3,183.65) -- (255.3,183.65) ;
\draw [shift={(308.3,183.65)}, rotate = 180] [fill={rgb, 255:red, 0; green, 0; blue, 0 }  ][line width=0.08]  [draw opacity=0] (12.5,-6.01) -- (0,0) -- (12.5,6.01) -- cycle    ;

\draw (75.42,233.8) node [anchor=north west][inner sep=0.75pt]  [rotate=-269.64] [align=left] {Convolutional};

\draw (163.42,233.8) node [anchor=north west][inner sep=0.75pt]  [rotate=-269.64] [align=left] {Convolutional};

\draw (104.42,228.8) node [anchor=north west][inner sep=0.75pt]  [rotate=-269.64] [align=left] {Max-pooling};

\draw (197.42,227.8) node [anchor=north west][inner sep=0.75pt]  [rotate=-269.64] [align=left] {Max-pooling};

\draw (136.42,210.99) node [anchor=north west][inner sep=0.75pt]  [rotate=-269.64] [align=left] {ReLU};

\draw (222.42,210.99) node [anchor=north west][inner sep=0.75pt]  [rotate=-269.64] [align=left] {ReLU};

\draw (62,266.72) node [anchor=north west][inner sep=0.75pt]   [align=left] {\textbf{Client}};

\draw (9.42,231.79) node [anchor=north west][inner sep=0.75pt]  [rotate=-269.64] [align=left] {Training Data};

\draw (356.42,210.99) node [anchor=north west][inner sep=0.75pt]  [rotate=-269.64] [align=left] {ReLU};

\draw (417.42,211.99) node [anchor=north west][inner sep=0.75pt]  [rotate=-269.64] [align=left] {ReLU};

\draw (331.19,265.72) node [anchor=north west][inner sep=0.75pt]   [align=left] {\textbf{Server}};

\draw (260,158) node [anchor=north west][inner sep=0.75pt]   [align=left] {$ATm$};

\draw (384.42,240.8) node [anchor=north west][inner sep=0.75pt]  [rotate=-269.64] [align=left] {Fully Connected};

\draw (325.42,240.8) node [anchor=north west][inner sep=0.75pt]  [rotate=-269.64] [align=left] {Fully Connected};

\end{tikzpicture}
}
	\caption{Local model with vanilla split learning}
	\label{fig:localsl}
\end{figure}
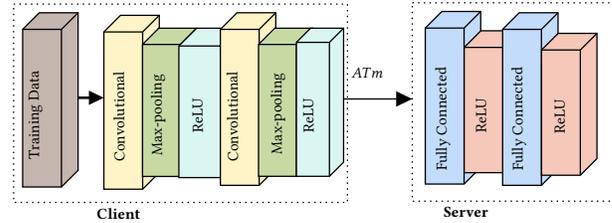

\subsection{Private local model}
\label{subsec:lfss}

In~\cite{ryffel2020ariann}, the authors proposed AriaNN, an efficient protocol for achieving private equality and comparison through FSS.
The model used in this study, as depicted in \autoref{fig:slfs}, consists of two Conv2D layers, two Max-pooling, four ReLUs and two FC layers. Among these layers, ReLU is supported as a direct application of the equality and comparison protocol, while the Conv2D layers are computed using beaver triples. More details, about the computations of these layers can be found in~\cite{ryffel2020ariann}. We reproduced the results for this variant of the protocol, achieving the best test accuracy of~97.29\%.

\subsection{Private vanilla SL}
\label{subsec:vsplit}

In our 
private vanilla SL protocol, the client handles the initial layers of the model, while the server takes care of the rest (see \autoref{fig:vanilla_sl}). The client, owning the data, executes their part of the model without splitting. On the server-side, as depicted in \autoref{fig:vanilla_sl}, consists of four layers: two ReLUs and two FC layers. For efficient online communication, we train the FC layers using beaver triples and the ReLU function using FSS. Our FSS implementation for ReLU layers ensures privacy using a comparison protocol. 

With FSS on the server-side, we now have two servers, each working on its private shared function and evaluating it on the public input $x_{pub}$. This protocol allows client-server collaboration in ML model training without exposing client's input data. Additionally, with some of the layers on the server-side, computational complexity is reduced, making the model more efficient. More details on the private vanilla SL will be discussed in \autoref{subsec: protocol vanillasl}.

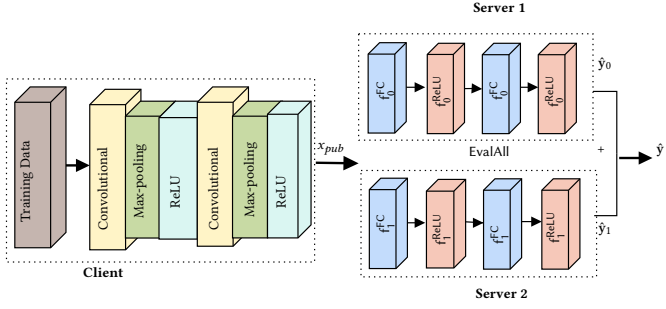
\begin{figure}
	\centering
	
	\tikzset{every picture/.style={line width=0.75pt}} 
	\scalebox{0.6}{
\tikzset{every picture/.style={line width=0.75pt}} 

\begin{tikzpicture}[x=0.75pt,y=0.75pt,yscale=-1,xscale=1]

\draw [line width=1.5]    (259,180.5) -- (291,180.94) ;
\draw [shift={(295,181)}, rotate = 180.8] [fill={rgb, 255:red, 0; green, 0; blue, 0 }  ][line width=0.08]  [draw opacity=0] (12.77,-6.13) -- (0,0) -- (12.77,6.13) -- cycle    ;
\draw  [fill={rgb, 255:red, 195; green, 220; blue, 252 }  ,fill opacity=1 ] (303,89.1) -- (312.6,79.5) -- (335,79.5) -- (335,145.9) -- (325.4,155.5) -- (303,155.5) -- cycle ; \draw   (335,79.5) -- (325.4,89.1) -- (303,89.1) ; \draw   (325.4,89.1) -- (325.4,155.5) ;
\draw  [fill={rgb, 255:red, 246; green, 200; blue, 185 }  ,fill opacity=1 ] (446,89.6) -- (455.6,80) -- (478,80) -- (478,146.4) -- (468.4,156) -- (446,156) -- cycle ; \draw   (478,80) -- (468.4,89.6) -- (446,89.6) ; \draw   (468.4,89.6) -- (468.4,156) ;
\draw  [dash pattern={on 0.84pt off 2.51pt}] (295,72) -- (492,72) -- (492,161.5) -- (295,161.5) -- cycle ;
\draw  [fill={rgb, 255:red, 195; green, 220; blue, 252 }  ,fill opacity=1 ] (399,89.6) -- (408.6,80) -- (431,80) -- (431,146.4) -- (421.4,156) -- (399,156) -- cycle ; \draw   (431,80) -- (421.4,89.6) -- (399,89.6) ; \draw   (421.4,89.6) -- (421.4,156) ;
\draw  [fill={rgb, 255:red, 195; green, 220; blue, 252 }  ,fill opacity=1 ] (304,203.1) -- (313.6,193.5) -- (336,193.5) -- (336,259.9) -- (326.4,269.5) -- (304,269.5) -- cycle ; \draw   (336,193.5) -- (326.4,203.1) -- (304,203.1) ; \draw   (326.4,203.1) -- (326.4,269.5) ;
\draw  [fill={rgb, 255:red, 246; green, 200; blue, 185 }  ,fill opacity=1 ] (449,203.6) -- (458.6,194) -- (481,194) -- (481,260.4) -- (471.4,270) -- (449,270) -- cycle ; \draw   (481,194) -- (471.4,203.6) -- (449,203.6) ; \draw   (471.4,203.6) -- (471.4,270) ;
\draw  [dash pattern={on 0.84pt off 2.51pt}] (297,187) -- (495,187) -- (495,276.5) -- (297,276.5) -- cycle ;
\draw  [fill={rgb, 255:red, 195; green, 220; blue, 252 }  ,fill opacity=1 ] (400,203.6) -- (409.6,194) -- (432,194) -- (432,260.4) -- (422.4,270) -- (400,270) -- cycle ; \draw   (432,194) -- (422.4,203.6) -- (400,203.6) ; \draw   (422.4,203.6) -- (422.4,270) ;
\draw    (335,117) -- (348,117) ;
\draw [shift={(351,117)}, rotate = 180] [fill={rgb, 255:red, 0; green, 0; blue, 0 }  ][line width=0.08]  [draw opacity=0] (8.93,-4.29) -- (0,0) -- (8.93,4.29) -- cycle    ;
\draw    (431,117) -- (443,117) ;
\draw [shift={(446,117)}, rotate = 180] [fill={rgb, 255:red, 0; green, 0; blue, 0 }  ][line width=0.08]  [draw opacity=0] (8.93,-4.29) -- (0,0) -- (8.93,4.29) -- cycle    ;
\draw    (336,231) -- (348,231) ;
\draw [shift={(351,231)}, rotate = 180] [fill={rgb, 255:red, 0; green, 0; blue, 0 }  ][line width=0.08]  [draw opacity=0] (8.93,-4.29) -- (0,0) -- (8.93,4.29) -- cycle    ;
\draw    (432,230) -- (446,230) ;
\draw [shift={(449,230)}, rotate = 180] [fill={rgb, 255:red, 0; green, 0; blue, 0 }  ][line width=0.08]  [draw opacity=0] (8.93,-4.29) -- (0,0) -- (8.93,4.29) -- cycle    ;
\draw [line width=1.5]    (514,176.5) -- (538,176.5) ;
\draw [shift={(542,176.5)}, rotate = 180] [fill={rgb, 255:red, 0; green, 0; blue, 0 }  ][line width=0.08]  [draw opacity=0] (11.61,-5.58) -- (0,0) -- (11.61,5.58) -- cycle    ;
\draw   (492,120.5) -- (512,120.5) -- (512,172.5) ;
\draw   (493,225) -- (512,225) -- (512,172.5) ;

\draw  [fill={rgb, 255:red, 255; green, 244; blue, 199 }  ,fill opacity=1 ] (69,131.62) -- (81.9,118.72) -- (112,118.72) -- (112,240.82) -- (99.1,253.72) -- (69,253.72) -- cycle ; \draw   (112,118.72) -- (99.1,131.62) -- (69,131.62) ; \draw   (99.1,131.62) -- (99.1,253.72) ;
\draw  [fill={rgb, 255:red, 200; green, 218; blue, 164 }  ,fill opacity=1 ] (99,141.32) -- (111.6,128.72) -- (141,128.72) -- (141,231.22) -- (128.4,243.82) -- (99,243.82) -- cycle ; \draw   (141,128.72) -- (128.4,141.32) -- (99,141.32) ; \draw   (128.4,141.32) -- (128.4,243.82) ;
\draw  [fill={rgb, 255:red, 218; green, 246; blue, 242 }  ,fill opacity=1 ] (127.2,142.52) -- (141,128.72) -- (173.2,128.72) -- (173.2,230.92) -- (159.4,244.72) -- (127.2,244.72) -- cycle ; \draw   (173.2,128.72) -- (159.4,142.52) -- (127.2,142.52) ; \draw   (159.4,142.52) -- (159.4,244.72) ;
\draw  [fill={rgb, 255:red, 197; green, 181; blue, 175 }  ,fill opacity=1 ] (6,129.62) -- (18.9,116.72) -- (49,116.72) -- (49,239.1) -- (36.1,252) -- (6,252) -- cycle ; \draw   (49,116.72) -- (36.1,129.62) -- (6,129.62) ; \draw   (36.1,129.62) -- (36.1,252) ;
\draw  [fill={rgb, 255:red, 255; green, 244; blue, 199 }  ,fill opacity=1 ] (159.4,129.62) -- (172.3,116.72) -- (202.4,116.72) -- (202.4,237.82) -- (189.5,250.72) -- (159.4,250.72) -- cycle ; \draw   (202.4,116.72) -- (189.5,129.62) -- (159.4,129.62) ; \draw   (189.5,129.62) -- (189.5,250.72) ;
\draw  [fill={rgb, 255:red, 200; green, 218; blue, 164 }  ,fill opacity=1 ] (189,141.32) -- (201.6,128.72) -- (231,128.72) -- (231,231.12) -- (218.4,243.72) -- (189,243.72) -- cycle ; \draw   (231,128.72) -- (218.4,141.32) -- (189,141.32) ; \draw   (218.4,141.32) -- (218.4,243.72) ;
\draw  [fill={rgb, 255:red, 218; green, 246; blue, 242 }  ,fill opacity=1 ] (218.4,139.7) -- (229.38,128.72) -- (255,128.72) -- (255,232.74) -- (244.02,243.72) -- (218.4,243.72) -- cycle ; \draw   (255,128.72) -- (244.02,139.7) -- (218.4,139.7) ; \draw   (244.02,139.7) -- (244.02,243.72) ;
\draw [line width=1.5]    (49,182.72) -- (65,182.72) ;
\draw [shift={(69,182.72)}, rotate = 180] [fill={rgb, 255:red, 0; green, 0; blue, 0 }  ][line width=0.08]  [draw opacity=0] (11.61,-5.58) -- (0,0) -- (11.61,5.58) -- cycle    ;
\draw  [dash pattern={on 0.84pt off 2.51pt}] (-1,110) -- (258,110) -- (258,264) -- (-1,264) -- cycle ;
\draw  [fill={rgb, 255:red, 246; green, 200; blue, 185}  ,fill opacity=1 ] (352,204.39) -- (361.88,194.51) -- (384.93,194.51) -- (384.93,259.12) -- (375.05,269) -- (352,269) -- cycle ; \draw   (384.93,194.51) -- (375.05,204.39) -- (352,204.39) ; \draw   (375.05,204.39) -- (375.05,269) ;
\draw    (383,118.13) -- (392,118.04) -- (395.91,118.25) ;
\draw [shift={(398.9,118.42)}, rotate = 183.17] [fill={rgb, 255:red, 0; green, 0; blue, 0 }  ][line width=0.08]  [draw opacity=0] (8.93,-4.29) -- (0,0) -- (8.93,4.29) -- cycle    ;
\draw    (384.62,230.17) -- (400,231.02) ;
\draw [shift={(400,231.02)}, rotate = 180] [fill={rgb, 255:red, 0; green, 0; blue, 0 }  ][line width=0.08]  [draw opacity=0] (8.93,-4.29) -- (0,0) -- (8.93,4.29) -- cycle    ;
\draw  [fill={rgb, 255:red, 246; green, 200; blue, 185}  ,fill opacity=1 ] (352.7,89.39) -- (362.09,80) -- (384,80) -- (384,146.61) -- (374.61,156) -- (352.7,156) -- cycle ; \draw   (384,80) -- (374.61,89.39) -- (352.7,89.39) ; \draw   (374.61,89.39) -- (374.61,156) ;

\draw (386,165.5) node [anchor=north west][inner sep=0.75pt]    {$\mathsf{EvalAll}$};

\draw (308.28,138.15) node [anchor=north west][inner sep=0.75pt]  [rotate=-270.59]  {\ \ \ $\mathsf{f^{FC}_{0}}$};

\draw (390,44) node [anchor=north west][inner sep=0.75pt]   [align=left] {\textbf{Server 1}};

\draw (392,285) node [anchor=north west][inner sep=0.75pt]   [align=left] {\textbf{Server 2}};

\draw (403.28,138.15) node [anchor=north west][inner sep=0.75pt]  [rotate=-270.59]  {\ \ \ $\mathsf{f^{FC}_{0}}$};

\draw (450.28,144.15) node [anchor=north west][inner sep=0.75pt]  [rotate=-270.59]  { \ \ \ $\mathsf{f^{ReLU}_{0}}$};

\draw (308.28,250.38) node [anchor=north west][inner sep=0.75pt]  [rotate=-270.59]  {\ \ \ $\mathsf{f^{FC}_{1}}$};

\draw (404.28,250.38) node [anchor=north west][inner sep=0.75pt]  [rotate=-270.59]  {$\mathsf{f^{FC}_{1}}$};

\draw (451.28,258.38) node [anchor=north west][inner sep=0.75pt]  [rotate=-270.59]  {\ \ \  $\mathsf{f^{ReLU}_{1}}$};

\draw (494.93,90.23) node [anchor=north west][inner sep=0.75pt]    {$\mathbf{\hat{y}}_{0}$};

\draw (495,229) node [anchor=north west][inner sep=0.75pt]    {$\mathbf{\hat{y}}_{1}$};

\draw (259,162) node [anchor=north west][inner sep=0.75pt]    {$x_{pub} $};

\draw (494,165.5) node [anchor=north west][inner sep=0.75pt]    {$+$};

\draw (543.93,162.23) node [anchor=north west][inner sep=0.75pt]    {$\mathbf{\hat{y}}$};

\draw (74.42,250.8) node [anchor=north west][inner sep=0.75pt]  [rotate=-269.64] [align=left] {\ \ \ \ Convolutional};

\draw (167.42,250.8) node [anchor=north west][inner sep=0.75pt]  [rotate=-269.64] [align=left] {\ \ \ \ Convolutional};

\draw (104.42,228.8) node [anchor=north west][inner sep=0.75pt]  [rotate=-269.64] [align=left] {Max-pooling};

\draw (197.42,227.8) node [anchor=north west][inner sep=0.75pt]  [rotate=-269.64] [align=left] {Max-pooling};

\draw (135.42,231.8) node [anchor=north west][inner sep=0.75pt]  [rotate=-269.64] [align=left] {\ \ \ \ ReLU};

\draw (224.42,231.8) node [anchor=north west][inner sep=0.75pt]  [rotate=-269.64] [align=left] {\ \ \ \ ReLU};

\draw (62,266.72) node [anchor=north west][inner sep=0.75pt]   [align=left] {\textbf{Client}};

\draw (9.42,231.79) node [anchor=north west][inner sep=0.75pt]  [rotate=-269.64] [align=left] {Training Data};

\draw (355.57,260.38) node [anchor=north west][inner sep=0.75pt]  [rotate=-270.59]  {\ \ \ $\mathsf{f^{ReLU}_{1}}$};

\draw (355.57,144.15) node [anchor=north west][inner sep=0.75pt]  [rotate=-270.59]  {\ \ \   $\mathsf{f^{ReLU}_{0}}$};

\end{tikzpicture}

}
	\caption{Private vanilla split learning}
	\label{fig:vanilla_sl}
\end{figure}

\subsection{Actors in the model}
\label{subsec:slActors}
Now, we define the main actors participating in the protocol and detail their abilities:

\begin{itemize}[leftmargin=0.6cm]
    \item \textbf{Client}: The client provides the training data and is capable of computing the initial layer of the model in plaintext. The client uses their private data to compute the initial layers of the model and shares the intermediate $ATm$ and the remaining layers with the servers. In our 
    approach, the client has to share the labels with the server. 
    \item \textbf{Servers}: Our protocol requires two servers to perform computations on the secret shares. 
    Based on the received $x_{pub}$, each server make computations on the underlying ML model. Both servers can compute the final outputs and compare it to the truth labels received from the client. 
    We assume that there is no collusion between the two servers and the servers are hosted independently to avoid reconstructing the original data from the shares. This assumption is realistic and consistent with other 2PC works in the area~\cite{mohassel2017secureml,ryffel2020ariann,agrawal2019quotient}.
\end{itemize}

\section{FSS based SL Protocol}
\label{sec:methodology}

This section provides a more detailed explanation of the algorithms for training the private vanilla SL model. 

\subsection{Private vanilla SL training protocol}
\label{subsec: protocol vanillasl}

We have used \autoref{alg:clientvsl} and \autoref{alg:servervsl} to train the private vanilla SL model reported in \autoref{subsec:vsplit}.  This model consists of the initialization, forward and backward propagation. The initialization phase only takes place once at the beginning of the procedure, whereas the other phases continue until the model iterates through all epochs. The initialization phase consists of socket 
and random weight initialization $\Phi$. 
The client first establishes a socket connection and synchronize the hyperparameters $\eta, n, N, E$. $ATm^{(i)}$ and the gradient are initially set to zero $\emptyset$. These parameters must be synchronized on both sides to be trained in the same way. There are two phases in training the whole protocol, forward and backward propagation phase. During the forward propagation, the client forward propagates the input $x$ through $f_{\theta_{C}}$ 
to generate an activation map $ATm$. The private input which is $ATm$ is first masked using random mask $\alpha$ to construct the public input $x_{pub} = ATm+ \alpha$ before sending it to the server. Once $x_{pub}$ is constructed, the next step is to send $x_{pub}$ to the server. As can be seen in \autoref{fig:serverside_FSS}, we have two servers labelled as $P_{0}$ and $P_{1}$. Each server receives the public input $x_{pub}$ and initiates forward propagation by executing their respective portion of the model. On the server-side, we employ FSS, allowing us to compute the function keys. Specifically, these keys are generated for the ReLU layers. We employ the FSS $\mathsf{KeyGen}$ algorithm, to create these keys for each server. 
Additionally, to compute the FC layers 
on the server-side, we used the beaver triples. 
As in private vanilla SL, the final output layer is executed on the server-side, hence both servers execute their respective share of the layer to produce the final output $\hat{y_{j}}$. 
Both the servers calculate the loss using the equation $J \leftarrow \mathcal{L} (\hat{y}_{j}, y_{j})$. During the loss calculation, the intermediate results and the final loss remain in the form of additive shares. Specifically, $y_{j}$ is the additive shares of the target value while $\hat{y}_{j}$ are the predicted output computed by each server. As the loss computation is performed by each servers, now, in order to get the final predicted output, the two servers must collaborate to add their secret shares. In this setting, the servers must know the actual target values as well as the predicted output of the model. 

\begin{figure*}
    \centering
    \tikzset{every picture/.style={line width=0.75pt}} 

\begin{tikzpicture}[x=1.0pt,y=1.0pt,yscale=-1,xscale=1]

\draw (156.52,60.31) node  {\includegraphics[width=12.38pt,height=19.04pt]{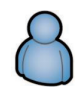}};
\draw (157.31,214.11) node  {\includegraphics[width=12.38pt,height=19.04pt]{figures/user.png}};
\draw    (167.34,69.33) -- (222.06,69.32) ;
\draw [shift={(225.06,69.32)}, rotate = 179.99] [fill={rgb, 255:red, 0; green, 0; blue, 0 }  ][line width=0.08]  [draw opacity=0] (8.93,-4.29) -- (0,0) -- (8.93,4.29) -- cycle    ;
\draw    (171.68,218) -- (221.28,218.09) ;
\draw [shift={(224.28,218.1)}, rotate = 180.11] [fill={rgb, 255:red, 0; green, 0; blue, 0 }  ][line width=0.08]  [draw opacity=0] (8.93,-4.29) -- (0,0) -- (8.93,4.29) -- cycle    ;
\draw (275.19,64.51) node  {\includegraphics[width=73.5pt,height=45.27pt]{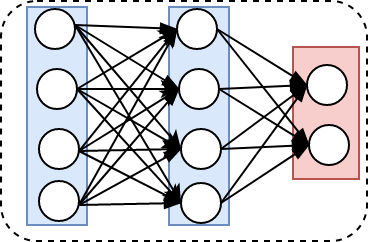}};
\draw  [fill={rgb, 255:red, 203; green, 221; blue, 243 }  ,fill opacity=1 ][dash pattern={on 0.84pt off 2.51pt}] (208.91,105.73) .. controls (208.91,103.23) and (210.94,101.2) .. (213.44,101.2) -- (311.48,101.2) .. controls (313.98,101.2) and (316.01,103.23) .. (316.01,105.73) -- (316.01,113.82) .. controls (316.01,116.32) and (313.98,118.35) .. (311.48,118.35) -- (213.44,118.35) .. controls (210.94,118.35) and (208.91,116.32) .. (208.91,113.82) -- cycle ;
\draw   (212.24,110.11) .. controls (212.24,106.49) and (215.18,103.56) .. (218.79,103.56) .. controls (222.41,103.56) and (225.34,106.49) .. (225.34,110.11) .. controls (225.34,113.72) and (222.41,116.66) .. (218.79,116.66) .. controls (215.18,116.66) and (212.24,113.72) .. (212.24,110.11) -- cycle ;
\draw  [fill={rgb, 255:red, 203; green, 221; blue, 243 }  ,fill opacity=1 ][dash pattern={on 0.84pt off 2.51pt}] (440.08,52.24) .. controls (440.08,49.71) and (442.13,47.66) .. (444.66,47.66) -- (539.09,47.66) .. controls (541.62,47.66) and (543.67,49.71) .. (543.67,52.24) -- (543.67,60.42) .. controls (543.67,62.95) and (541.62,65) .. (539.09,65) -- (444.66,65) .. controls (442.13,65) and (440.08,62.95) .. (440.08,60.42) -- cycle ;
\draw   (443.07,54.83) .. controls (443.07,51.03) and (445.64,47.95) .. (448.8,47.95) .. controls (451.96,47.95) and (454.53,51.03) .. (454.53,54.83) .. controls (454.53,58.64) and (451.96,61.72) .. (448.8,61.72) .. controls (445.64,61.72) and (443.07,58.64) .. (443.07,54.83) -- cycle ;
\draw   (407,36) -- (418.01,36) -- (418.01,59.33) ;
\draw    (418.01,59.33) -- (437.41,58.8) ;
\draw [shift={(440.41,58.72)}, rotate = 178.44] [fill={rgb, 255:red, 0; green, 0; blue, 0 }  ][line width=0.08]  [draw opacity=0] (8.93,-4.29) -- (0,0) -- (8.93,4.29) -- cycle    ;
\draw    (598.05,103.58) -- (349.42,103.58) ;
\draw [shift={(346.42,103.58)}, rotate = 360] [fill={rgb, 255:red, 0; green, 0; blue, 0 }  ][line width=0.08]  [draw opacity=0] (8.93,-4.29) -- (0,0) -- (8.93,4.29) -- cycle    ;
\draw    (598.05,103.58) -- (597.82,91.2) -- (597.73,86.36) ;

\draw (279.02,216.4) node  {\includegraphics[width=79.83pt,height=52.5pt]{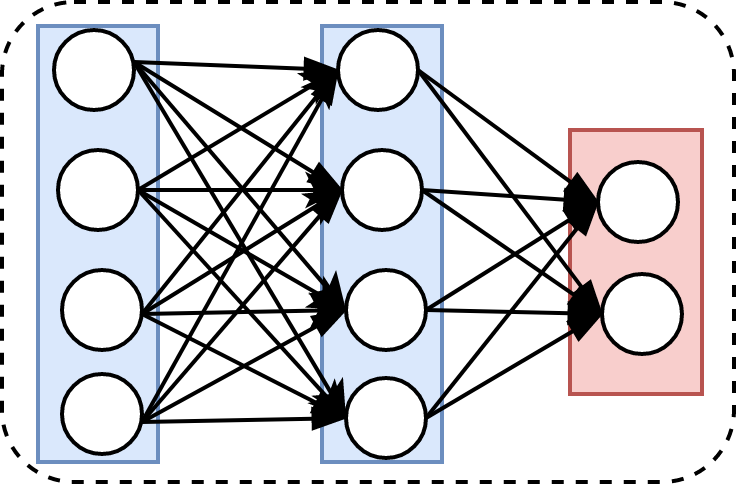}};
\draw    (402.92,216.2) -- (588.92,216.2) ;
\draw [shift={(591.92,216.2)}, rotate = 180] [fill={rgb, 255:red, 0; green, 0; blue, 0 }  ][line width=0.08]  [draw opacity=0] (8.93,-4.29) -- (0,0) -- (8.93,4.29) -- cycle    ;
\draw    (600.42,182.83) -- (350.72,182.83) ;
\draw [shift={(347.72,182.83)}, rotate = 360] [fill={rgb, 255:red, 0; green, 0; blue, 0 }  ][line width=0.08]  [draw opacity=0] (8.93,-4.29) -- (0,0) -- (8.93,4.29) -- cycle    ;
\draw    (600.42,182.83) -- (600.19,196.68) -- (600.1,202.08) ;

\draw  [fill={rgb, 255:red, 203; green, 221; blue, 243 }  ,fill opacity=1 ][dash pattern={on 0.84pt off 2.51pt}] (442.74,227.57) .. controls (442.74,224.73) and (445.04,222.43) .. (447.88,222.43) -- (541.45,222.43) .. controls (544.29,222.43) and (546.59,224.73) .. (546.59,227.57) -- (546.59,236.73) .. controls (546.59,239.56) and (544.29,241.86) .. (541.45,241.86) -- (447.88,241.86) .. controls (445.04,241.86) and (442.74,239.56) .. (442.74,236.73) -- cycle ;
\draw   (444.8,232.32) .. controls (444.8,228.59) and (447.41,225.58) .. (450.62,225.58) .. controls (453.84,225.58) and (456.44,228.59) .. (456.44,232.32) .. controls (456.44,236.04) and (453.84,239.06) .. (450.62,239.06) .. controls (447.41,239.06) and (444.8,236.04) .. (444.8,232.32) -- cycle ;
\draw   (404.92,248.83) -- (414.92,248.83) -- (414.92,230.17) ;
\draw    (414.92,230.17) -- (434.32,229.64) ;
\draw [shift={(437.32,229.56)}, rotate = 178.44] [fill={rgb, 255:red, 0; green, 0; blue, 0 }  ][line width=0.08]  [draw opacity=0] (8.93,-4.29) -- (0,0) -- (8.93,4.29) -- cycle    ;
\draw  [fill={rgb, 255:red, 203; green, 221; blue, 243 }  ,fill opacity=1 ][dash pattern={on 0.84pt off 2.51pt}] (157.23,133.97) .. controls (157.23,130.95) and (159.68,128.5) .. (162.7,128.5) -- (235.87,128.5) .. controls (238.9,128.5) and (241.34,130.95) .. (241.34,133.97) -- (241.34,143.73) .. controls (241.34,146.75) and (238.9,149.2) .. (235.87,149.2) -- (162.7,149.2) .. controls (159.68,149.2) and (157.23,146.75) .. (157.23,143.73) -- cycle ;
\draw  [fill={rgb, 255:red, 255; green, 198; blue, 198 }  ,fill opacity=1 ][dash pattern={on 0.84pt off 2.51pt}] (157.59,159.68) .. controls (157.59,156.75) and (159.96,154.38) .. (162.89,154.38) -- (235.38,154.38) .. controls (238.31,154.38) and (240.68,156.75) .. (240.68,159.68) -- (240.68,169.13) .. controls (240.68,172.05) and (238.31,174.42) .. (235.38,174.42) -- (162.89,174.42) .. controls (159.96,174.42) and (157.59,172.05) .. (157.59,169.13) -- cycle ;
\draw    (409.92,71.2) -- (570.92,71.2) ;
\draw [shift={(573.92,71.2)}, rotate = 180] [fill={rgb, 255:red, 0; green, 0; blue, 0 }  ][line width=0.08]  [draw opacity=0] (8.93,-4.29) -- (0,0) -- (8.93,4.29) -- cycle    ;
\draw    (331.92,216.2) -- (362.92,216.2) ;
\draw [shift={(365.92,216.2)}, rotate = 180] [fill={rgb, 255:red, 0; green, 0; blue, 0 }  ][line width=0.08]  [draw opacity=0] (8.93,-4.29) -- (0,0) -- (8.93,4.29) -- cycle    ;
\draw    (324.92,71.2) -- (355.92,71.2) ;
\draw [shift={(358.92,71.2)}, rotate = 180] [fill={rgb, 255:red, 0; green, 0; blue, 0 }  ][line width=0.08]  [draw opacity=0] (8.93,-4.29) -- (0,0) -- (8.93,4.29) -- cycle    ;
\draw  [color={rgb, 255:red, 218; green, 240; blue, 195 }  ,draw opacity=1 ][fill={rgb, 255:red, 218; green, 240; blue, 195 }  ,fill opacity=1 ] (352.33,138.39) .. controls (352.33,136.17) and (354.14,134.37) .. (356.36,134.37) -- (380.97,134.37) .. controls (383.2,134.37) and (385,136.17) .. (385,138.39) -- (385,150.48) .. controls (385,152.7) and (383.2,154.51) .. (380.97,154.51) -- (356.36,154.51) .. controls (354.14,154.51) and (352.33,152.7) .. (352.33,150.48) -- cycle ;
\draw [line width=0.75]  [dash pattern={on 4.5pt off 4.5pt}]  (362.42,154.74) -- (362.56,182.76) ;
\draw  [dash pattern={on 4.5pt off 4.5pt}]  (362.22,106.97) -- (362.23,118.67) -- (362.55,133.4) ;
\draw [shift={(362.21,103.97)}, rotate = 89.91] [fill={rgb, 255:red, 0; green, 0; blue, 0 }  ][line width=0.08]  [draw opacity=0] (3.57,-1.72) -- (0,0) -- (3.57,1.72) -- cycle    ;
\draw  [dash pattern={on 4.5pt off 4.5pt}]  (376.67,178.73) -- (377.04,154.29) ;
\draw [shift={(376.63,181.72)}, rotate = 270.86] [fill={rgb, 255:red, 0; green, 0; blue, 0 }  ][line width=0.08]  [draw opacity=0] (3.57,-1.72) -- (0,0) -- (3.57,1.72) -- cycle    ;
\draw [line width=0.75]  [dash pattern={on 3.75pt off 3pt on 7.5pt off 1.5pt}]  (376.9,104.49) -- (377.04,133.51) ;
\draw  [fill={rgb, 255:red, 203; green, 221; blue, 243 }  ,fill opacity=1 ][dash pattern={on 0.84pt off 2.51pt}] (485.91,113.13) .. controls (485.91,110.2) and (488.28,107.83) .. (491.2,107.83) -- (590.38,107.83) .. controls (593.31,107.83) and (595.68,110.2) .. (595.68,113.13) -- (595.68,122.57) .. controls (595.68,125.5) and (593.31,127.87) .. (590.38,127.87) -- (491.2,127.87) .. controls (488.28,127.87) and (485.91,125.5) .. (485.91,122.57) -- cycle ;
\draw   (487.24,116.44) .. controls (487.24,112.82) and (490.18,109.89) .. (493.79,109.89) .. controls (497.41,109.89) and (500.34,112.82) .. (500.34,116.44) .. controls (500.34,120.06) and (497.41,122.99) .. (493.79,122.99) .. controls (490.18,122.99) and (487.24,120.06) .. (487.24,116.44) -- cycle ;

\draw (178.21,52.67) node [anchor=north west][inner sep=0.75pt]  [font=\small]  {${\textstyle \langle x_{( pub)} \rangle }$};
\draw (150.27,76.4) node [anchor=north west][inner sep=0.75pt]    {$P_{0}$};
\draw (151.06,230.2) node [anchor=north west][inner sep=0.75pt]    {$P_{1}$};
\draw (178.54,200.01) node [anchor=north west][inner sep=0.75pt]  [font=\small]  {${\textstyle \langle x_{( pub)} \rangle }$};
\draw (292.54,76.34) node [anchor=north west][inner sep=0.75pt]    {${\displaystyle \langle w \rangle _{0}}$};
\draw (292.54,190.67) node [anchor=north west][inner sep=0.75pt]    {${\displaystyle \langle w \rangle _{1}}$};
\draw (370.54,26.67) node [anchor=north west][inner sep=0.75pt]    {${\displaystyle \langle y\rangle _{0}}$};
\draw (370.54,51.73) node [anchor=north west][inner sep=0.75pt]    {${\textstyle \langle \hat{y} \rangle _{0}}$};
\draw (367.88,239.4) node [anchor=north west][inner sep=0.75pt]    {${\displaystyle \langle y\rangle _{1}}$};
\draw (367.88,208.46) node [anchor=north west][inner sep=0.75pt]    {${\displaystyle \langle \hat{y} \rangle _{1}}$};
\draw (582.38,58.73) node [anchor=north west][inner sep=0.75pt]    {$J_{0}$};
\draw (596.38,202.73) node [anchor=north west][inner sep=0.75pt]    {${\textstyle J_{1}}$};
\draw (379.88,73.73) node [anchor=north west][inner sep=0.75pt]    {${\textstyle w{_{0}^{n+1}} \ \leftarrow (w_{0}^{n} \ -\ \eta \nabla _{w}\mathcal{L})}$};
\draw (412.38,186.23) node [anchor=north west][inner sep=0.75pt]    {$w{_{1}^{n+1}} \ \leftarrow (w_{1}^{n} \ -\ \eta \nabla _{w}\mathcal{L})$};
\draw (389.38,134.73) node [anchor=north west][inner sep=0.75pt]    {${\textstyle \nabla _{w}\mathcal{L} \ =\ \left[\frac{\partial \mathcal{L}}{\partial w}\right]}$};
\draw (393.88,157.73) node [anchor=north west][inner sep=0.75pt]    {$J\ \leftarrow \mathcal{L}( \langle \hat{y} \rangle \ -\ \langle y\rangle )$};
\draw (247.04,130.73) node [anchor=north west][inner sep=0.75pt]    {$z\ =\ w*x+b$};
\draw (246.54,153.23) node [anchor=north west][inner sep=0.75pt]    {${\displaystyle ATm\ =\ ReLU( z)}$};
\draw (172.58,133.52) node [anchor=north west][inner sep=0.75pt]   [align=left] {{\small Linear Layer}};
\draw (172.76,160.01) node [anchor=north west][inner sep=0.75pt]   [align=left] {{\small ReLU Layer}};
\draw (230.17,105.5) node [anchor=north west][inner sep=0.75pt]   [align=left] {{\small Forward Prop.}};
\draw (215.17,106.5) node [anchor=north west][inner sep=0.75pt]   [align=left] {{\small 1}};
\draw (456.97,50.67) node [anchor=north west][inner sep=0.75pt]   [align=left] {{\small  Compute Cost}};
\draw (444.81,50.52) node [anchor=north west][inner sep=0.75pt]   [align=left] {{\small 2}};
\draw (359.59,141.01) node [anchor=north west][inner sep=0.75pt]   [align=left] {\textit{{\small 2PC}}};
\draw (504.17,112.83) node [anchor=north west][inner sep=0.75pt]   [align=left] {{\small Backward Prop.}};
\draw (491.17,112.83) node [anchor=north west][inner sep=0.75pt]   [align=left] {{\small 3}};
\draw (458.9,228) node [anchor=north west][inner sep=0.75pt]   [align=left] {{\small  Compute Cost}};
\draw (447.26,228.14) node [anchor=north west][inner sep=0.75pt]   [align=left] {{\small 2}};

\end{tikzpicture}

    \caption{Utilizing function secret sharing between two servers}
    \label{fig:serverside_FSS}
\end{figure*}

After completing the forward propagation,
each server starts the backward propagation by first computing $\frac{\partial J_{j}}{\partial \hat{\mathbf{y_{j}}}}$ 
and continue the backward propagation algorithm until layer $l+1$ and update the weights and biases of the layers using the given equations.

\begin{align}
	\label{equ:serverUpdateWB}
	 \boldsymbol{w} =  \boldsymbol{w} - \eta\frac{\partial J}{\partial \boldsymbol{w}}, \quad  
\end{align}

\begin{align}
	\label{equ:serverUpdateWB}
	  \boldsymbol{b} = \boldsymbol{b} - \eta\frac{\partial J}{\partial \boldsymbol{b}} 
\end{align}

During backward propagation, each server also calculates: 

\begin{equation}
	\frac{\partial J_{j}}{\partial ATm^{(l)}} = \frac{\partial J}{\partial ATm^{(l+1)}} \frac{\partial ATm^{(l+1)}}{\partial ATm^{(l)}},
\end{equation}

and sends $\frac{\partial J}{\partial ATm^{(l)}}$ to the client. After receiving $\frac{\partial J}{\partial ATm^{(l)}}$, the client calculates the gradients of $J$ w.r.t the weights and biases of its part of layers using the chain-rule, which can generally be described 
as:

\begin{align}
	\label{equ: gradients}
	\frac{\partial J_{j}}{\partial \boldsymbol{w}^{(i-1)}} &= \frac{\partial J_{j}}{\partial \boldsymbol{w}^{(i)}}\frac{\partial \boldsymbol{w}^{(i)}}{\partial \boldsymbol{w}^{(i-1)}} \\
	\frac{\partial J_{j}}{\partial \boldsymbol{b}^{(i-1)}} &= \frac{\partial J_{j}}{\partial \boldsymbol{b}^{(i)}}\frac{\partial \boldsymbol{b}^{(i)}}{\partial \boldsymbol{b}^{(i-1)}}    
\end{align}

Finally, after calculating the gradients $\frac{\partial J}{\partial \boldsymbol{w}^{(i)}}, \ \frac{\partial J}{\partial \boldsymbol{b}^{(i)}}$, the client updates $\boldsymbol{w}^{(i)}$ and $\boldsymbol{b}^{(i)}$. The forward and backward propagation continue until the model converges to learn a suitable set of parameters.

\begin{algorithm}
 \textbf{\underline{Initialization}:}\\
 $soc\leftarrow$ socket initialized with port and address\;
 $soc.\mathsf{connect}$\\
 $\eta, n, N, E \leftarrow soc.\mathsf{synchronize}$\\
 $ \{\boldsymbol{w}^{( i)}, \boldsymbol{b}^{( i)}\}_{\forall i\in \{0..l\}} \ \leftarrow$ \ initialize using $\Phi $\\
 $
 \{ATm^{( i)}\}_{\forall i\in \{0..l\}}\leftarrow \emptyset \ $\\
 $ 
 \left\{\frac{\partial J}{\partial ATm^{( i)}}\right\}_{\forall i\in \{0..l\}}\leftarrow \emptyset \ $\\
 \For{$\displaystyle e \ \in \ E $}
    {
 	\For{$\displaystyle \text{each} \ \text{batch}\ ( x,\ y) \ \text{generated\ from}\ D\ $}
        {
        $\displaystyle  \mathbf{\underline{Forward\ Propagation}:}$ \\
        \For{$i \leftarrow 1$ to $l$}
            {
            $\displaystyle ATm^{( i)} \ \leftarrow \ f_{\theta_{C}}\left( x^{( i)}\right)$\\}
            $\displaystyle  x_{pub}^{i} \ \leftarrow \ ATm^{( i)} + \alpha$\\
         	$\displaystyle   soc.\mathsf{send}\ ( x_{pub}^{i})$\\
         	$\displaystyle \mathbf{\underline{Backward\ Propagation}:}$\\
         	$\displaystyle  soc.\mathsf{receive}\ \left( \frac{\partial J}{\partial ATm_{j}^{(l+1)}} \right)$\\
           	$\displaystyle \text{Compute}\left\{\frac{\partial J}{\partial ATm^{l}}\right\}$\\
         	\For{$i\leftarrow l$ to $1$}
                {
                $\text{Compute}\ \left\{ \frac{\partial J}{\partial \boldsymbol{w}^{( i)}}, \ \frac{\partial J}{\partial \boldsymbol{b}^{( i)}} \right\}$\\
 	            $\displaystyle\  \text{Update}\ \boldsymbol{w}^{( i)},\ \boldsymbol{b}^{( i)}$
                }
 	    }
    }
 \caption{\textbf{Client-Side Vanilla SL}}
 \label{alg:clientvsl}
\end{algorithm}
\begin{algorithm}
 \textbf{\underline{Initialization}:}\\
 $soc\leftarrow$ socket initialized with port and address\;
 $soc.\mathsf{connect}$\\
 $\eta, n, N, E \leftarrow soc.\mathsf{synchronize}$\\
 \For{$\displaystyle e \ \in \ E $}
    {
 	\For{$\displaystyle i \leftarrow l+1 \ \mathbf{to} \ N \ $}
        {
        $\displaystyle \mathbf{\underline{Forward\ Propagation}:}$\\
        $\displaystyle soc.\mathsf{receive}\ (x_{pub}^{i}) \ \ $ \\
        \For{$\displaystyle j= 0, 1  $}
            {
            $\displaystyle   \hat{y_{j}} \ \leftarrow \ f_{\theta_{Pj}}\left(x_{pub}\right)$\\
            $\displaystyle  J_{j} \leftarrow \mathcal{L} (\hat{y_{j}}, y_{j})$\\}
            $\displaystyle \mathbf{\underline{Backward\ Propagation}:}$\\
            $\displaystyle \text{Compute}\left\{\frac{\partial J_{j}}{\partial \hat{y_{j}}}\right\}$\\
            \For{$i\leftarrow L$ to $l+1$}
                {
                $\text{Compute}\ \left\{ \frac{\partial J_{j}}{\partial \boldsymbol{w}_{j}^{( i)}}, \ \frac{\partial J_{j}}{\partial \boldsymbol{b}_{j}^{( i)}} \right\}$\\
                $\displaystyle\ \text{Update}\ \boldsymbol{w}_{j}^{( i)},\ \boldsymbol{b}_{j}^{( i)}$
                }
                $\displaystyle soc.\mathsf{send} \ \left( \frac{\partial J_{j}}{\partial ATm_{j}^{(l+1)}}\right)$\\
 		}
    }
 \caption{\textbf{Server-Side Vanilla SL}}
 \label{alg:servervsl}
\end{algorithm}

\section{Threat Model \& Security Analysis}
\label{sec:threat_model_and_sec}


In this section we first define the threat model we consider. Then, we proceed by proving the security of our protocol. 

\subsection{Threat model}
\label{subsec:threat_model}

In SL, collaborative learning of local models introduces potential risk due to interactions among the participants~\cite{li2022ressfl}. 
In this study, we examine a scenario involving a semi-honest adversary, denoted as $\mathcal{ADV}$, capable of corrupting one of the two servers engaged in our protocol. In this context, the corrupted parties adhere to the protocol's specifications, all the while attempting to gather as much information as possible about the inputs and function shares of other participants. Within our specific framework, secure protocols are employed to facilitate collaborative CNN model training by three parties: a client and two servers. The client, who also serves as the data owner (note that considering a malicious client in this context is pointless), initiates the process by executing the initial layers of the model. This action produces an intermediary result, denoted as $ATm$, which is subsequently kept confidential through secret sharing between two independent servers. Following this, these two servers proceed to jointly train the remaining segments of the model using the client's data, achieved by employing the FSS protocol.
Apart from that, the behavior of $\mathcal{ADV}$  can be summarized as follows: 

\begin{itemize}[leftmargin=0.6cm]
    \item $\mathcal{ADV}$ does not possess information about the
client participating in the distributed training, except the necessary details required to execute the SL protocol. 
\item $\mathcal{ADV}$ is familiar with a \textit{shadow} dataset, which captures the same domain as the clients' training sets.

\item $\mathcal{ADV}$ has no knowledge of the architecture and weights of the whole model.
\item  $\mathcal{ADV}$ is unaware of the specific task on which the distributed model is trained.
\end{itemize}



Finally, we extend the above threat model by defining two attacks available to $\mathcal{ADV}$ who wishes to launch an FSHA or perform data analysis through VI. 

\begin{myAttack}[Feature-Space Hijacking Attack] 
Let $\mathcal{ADV}$ be an adversary that has compromised at most one of the servers ($P_0$ or $P_1$). $\mathcal{ADV}$ successfully performs an \textit{FSHA} if she manages to substitute the original learning task chosen by the client with a new function intentionally designed to mold the feature space of $f$. 
\end{myAttack} 

\begin{myAttack}[Visual Invertibility Inference Attack] Let $\mathcal{ADV}$ be an adversary, that has compromised at most one of the servers ($P_0$ or $P_1$). $\mathcal{ADV}$ successfully performs a Visual Invertibility Inference Attack (VIIA) if she manages to infer sensitive information of the original input data using the acquired convolutions from the client. $\mathcal{ADV}$ attempts to identify the original images characteristics and truth labels using intermediate data received during the SL 
process.
\end{myAttack}

\subsection{Feature-space hijacking attack}
\label{subsec:fsha}
Here, we describe the main concept behind an FSHA. 
In such an attack, the client owns its private data and is responsible for training the $f_{\theta_{C}}$. Instead of running its part of the SL protocol, the server runs the code provided by the $\mathcal{ADV}$. To carry out the attack $\mathcal{ADV}$ relies on three networks:
The first two $\tilde{f}$ and $\tilde{f}^{-1}$ are used to build an autoencoder which is trained with a shadow dataset. 
Specifically, $\tilde{f}$ maps the data space to a feature space whereas $\tilde{f}^{-1}$ act as a decoder that maps back the feature space into the original data space. The third one is Discriminator (D), the network that substitutes $f_{\theta_{P_{j}}}$, $j\in \{0, 1\}$, and is trained to distinguish between the output of $f_{\theta_{C}}$ and the output of network $\tilde{f}$. It is used to caste the gradients for $f_{\theta_{C}}$ during training and bring it in an insecure state. D indirectly guides $f_{\theta_{C}}$ to learn a mapping between the private data and the feature space defined from $\tilde{f}$. More specifically, D is used to train $f_{\theta_{C}}$ to produce an output that is as close as possible to the one of $\tilde{f}$. Ultimately, this is the network that substitutes server network in the protocol, and that defines the gradient which is sent to the client during the distributed training process. After certain number of iterations, the feature space learned by $f_{\theta_{C}}$ overlaps with one defined from $\tilde{f}$. Once this happens, $\tilde{f}^{-1}$ is used to map back this smashed data sent by the client to do the training and recover the approximation of the original private data. 

\subsection{Visual invertibility inference attack}
\label{subsec:visualinvertibility}

\begin{figure}[!h]
    \centering
    \includegraphics[width=0.45\textwidth]{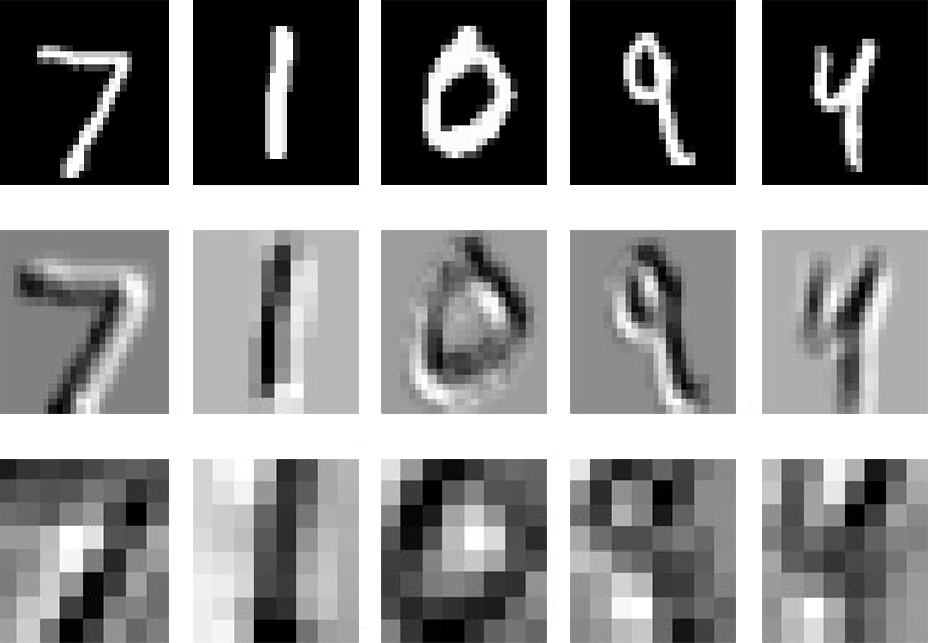}
    \caption{Visual invertibility inference attack: First row shows the original images (image size: $28\times28$) from the MNIST dataset. Second row shows the $ATm$ after the first Conv2D layer before the first max-pooling operation (image size: $24\times24$). Third row displays the $ATm$ after the second Conv2D layer before the second max-pooling operation (image size: $8\times8$).}
    \label{fig:MNIST_vis_inv}
\end{figure}

VI serves as a privacy assessment metric designed to measure the correlations between the split layer $ATm$ and original raw data samples. The underlying concept of VIIA involves the client initially training its part of the model to create the $ATm$. Afterwards, the client sends the $ATm$ to the server. Upon reception, if the server is malicious, it could exploit the received $ATm$ to extract private information by inferring the client's raw data and the corresponding truth label. In essence, VIIA assesses the risk for unintended data exposure during the model's collaborative learning process.  
In our SL setup, the client computes $f_{\theta_{C}}$ before outsourcing the remaining operations to a server. The client must share $ATm$ to the server for computing $f_{\theta_{P}}$. However, $ATm$ have patterns that are very similar to the raw image data and cause a privacy leakage. This similarity is visible in \autoref{fig:MNIST_vis_inv}, where the first row shows the raw image data and the second and third rows show intermediate outputs after the first and second Conv2D layers respectively. Even though the similarities are reduced after the second Conv2D layer, the $\mathcal{ADV}$ could extract some information and features of the private data, compromising the client's data privacy.

\subsection{Security analysis}
\label{sec: security-analysis}

With our threat model established, we can now move forward to demonstrate the security of our protocol.

\begin{proposition}[FSHA Soundness]
\label{theorem:key-distribution}
   Let $\mathcal{ADV}$ be a semi-honest adversary that corrupts at most one of the two servers ($P_0$ or $P_1$) involved in the protocol. Then $\mathcal{ADV}$ cannot launch a successful Feature-Space Hijacking attack.
\end{proposition}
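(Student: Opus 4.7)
The plan is to reduce FSHA soundness to two independent secrecy guarantees already available in the construction: the information-theoretic hiding of the client's random mask $\alpha$, and the FSS secrecy property stating that a single key $\mathsf{f}_j$ hides the server-side function $f$. A successful FSHA requires the malicious server to simultaneously (i) observe enough about $f_{\theta_C}(x)$ to train a discriminator $D$ that aligns the client's feature space with the attacker's autoencoder $\tilde{f}$, and (ii) inject a gradient signal during backward propagation that actively reshapes $f_{\theta_C}$ toward that feature space. I would argue that a single corrupted server, under the non-collusion assumption between $P_0$ and $P_1$, can achieve neither.

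For step (i), the first move is to inspect the adversary's view. The only client-originated message reaching a corrupt server is $x_{pub} = ATm + \alpha$, where $\alpha$ is sampled fresh by the client and never transmitted. A standard one-time-pad style argument shows that, conditioned on the adversary's view of a single party, $x_{pub}$ is distributed independently of $ATm$. Any discriminator $D$ trained on such inputs therefore receives no gradient signal that is tied to $f_{\theta_C}$, and so cannot drive its parameters toward the shadow-dataset feature space learned by $\tilde{f}$. This immediately blocks the core mechanism by which FSHA crafts the hijacking loss.

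For step (ii), I would invoke FSS secrecy: each server holds only $\mathsf{f}_j$, which by the secrecy property reveals nothing about $f = \mathsf{f}_0 + \mathsf{f}_1$. Even if the corrupt server replaces its honest evaluation with arbitrary attacker code, any quantity it produces is combined, at the client, with the honest server's legitimately computed share before it is applied to $f_{\theta_C}$. The cleanest way to formalize this is a simulation argument: construct a simulator that, given only $(\mathsf{f}_j, x_{pub})$ and public hyperparameters, produces a view computationally indistinguishable from the adversary's real view; since everything the adversary can compute is a function of these simulator inputs, it is independent of $ATm$ and of $f_{\theta_C}$'s true weights, and therefore cannot be aligned with $\tilde{f}$.

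The main obstacle will be reconciling the semi-honest hypothesis in the proposition with the fact that an FSHA adversary, by definition, substitutes the server's code. I would handle this by interpreting \emph{semi-honest} to mean that the corrupt server still correctly transmits its FSS outputs to the collaborating honest server (so that key distribution and share routing are unaltered), while possibly performing arbitrary local computation. Under this reading, the masking of the injected gradient share by the honest server's share becomes the decisive argument, and the proof collapses to invoking FSS secrecy plus the perfect hiding of $\alpha$. Making this masking step fully rigorous, rather than informal, is where the technical weight of the proof will sit.
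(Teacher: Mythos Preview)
Your argument is sound in outline but takes a genuinely different route from the paper. The paper's proof is built around a single key-indistinguishability game: it argues that a corrupted server holding $\mathsf{f}_j$ cannot tell which function $f$ the key came from (by FSS secrecy), and therefore cannot ``understand the underlying feature space of $f_{\theta_P}$'' well enough to mount FSHA. In other words, the paper pins the defence on the hiding of the \emph{server-side} function and never invokes the masking of $ATm$ for this proposition (that argument is reserved for the VIIA proposition). You, by contrast, decompose along the actual FSHA mechanics---(i) the attacker must observe $f_{\theta_C}(x)$ to train the discriminator $D$, and (ii) it must push a crafted gradient back to the client---and neutralise each step separately via the mask on $ATm$ and FSS secrecy respectively. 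Your decomposition is closer to how the attack of Pasquini \emph{et al.} operates and makes explicit the semi-honest/active tension that the paper leaves implicit; the paper's approach is shorter and leans entirely on the FSS key game, at the cost of a looser link between ``cannot recover $f$'' and ``cannot hijack $f_{\theta_C}$.''

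One technical correction: the hiding of $\alpha$ is not information-theoretic in this setting. Each server receives both $x_{pub}=ATm+\alpha$ \emph{and} its FSS key $\mathsf{f}_j$, and the latter is generated from the function parameterised by $\alpha$; the mask is therefore only \emph{computationally} hidden, via the FSS secrecy property (ultimately the PRG). Your one-time-pad analogy should be replaced by the same FSS-secrecy reduction you already use in step~(ii). Also, your claim that the honest server's gradient share ``masks'' a malicious contribution is not by itself enough---additive reconstruction lets a deviating server shift the combined gradient arbitrarily---so your resolution via the semi-honest reading (the corrupt server still outputs protocol-compliant shares) is doing essential work and should be stated up front rather than as an afterthought.
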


\begin{proof}
    Let $\mathcal{ADV}$ be a semi-honest 
    adversary that has compromised $P_0$ or $P_1$. In addition to that, assume $\mathsf{f}_j$, $j \in [0,1]$ be the key distribution of an arbitrary function $f$. Then, if $\mathcal{ADV}$ can only distinguish the key distribution $\mathsf{f}_j$ from another $\mathsf{f}^'_j$ (for some other function $f'$), with negligible probability then our protocol is secure against the Feature-Space Hijacking attack. 
    

To show that $\mathcal{ADV}$ can only distinguish between these values with negligible probability, we will use a proof by contradiction. Let's assume the opposite, that $\mathcal{ADV}$ can distinguish from which function ($f$ or $f'$) the key came from. This means that $\mathcal{ADV}$ has a non-negligible advantage to correctly determine whether the key came from $f$ or $f'$. 

    To test this assumption, let's consider a security game, where $\mathcal{ADV}$ has access to both $f$ and $f'$ and receives $\mathsf{f}_j$ from a challenger $\mathcal{CHAL}$:

    Following this game, the $\mathcal{ADV}$ would have a method to correctly guess which function the key $\mathsf{f}_j$ represents, through the bit $b'$, so that $b' == b$. However, the key $\mathsf{f}_j$ is generated in such a way, that it does not reveal any information about the original function $f$ \autoref{subsec:fss}. As such, the $\mathcal{ADV}$ has a~50\% chance of successfully guessing the original function, just like flipping a fair coin with a negligible advantage. This reaches a contradiction because the $\mathcal{ADV}$ cannot accurately guess the original function with a non-negligible advantage from the obtained function key $\mathsf{f}_j$.

Now, in order to show the soundness of our protocol against FSHA, we will focus on a scenario where the $\mathcal{ADV}$ takes control of one of the servers and leaves the other server and client untouched. 
The $\mathcal{ADV}$ tries to manipulate the feature space or the model's output and to accomplish this 
the $\mathcal{ADV}$ would need to understand the underlying feature space of $f_{\theta_P}$. 
However, through the use of FSS, $f_{\theta_{P}}$ is split between two servers which makes understanding the model and feature space improbable. Each server holds a share of the function but not the entire function itself. This makes it more challenging for an $\mathcal{ADV}$ to compromise the entire model by taking control of one server. Additionally, guessing the function share of the non-compromised server from the share of the compromised server is not possible (\autoref{theorem:key-distribution}). Hence, we prove that our proposed private vanilla SL protocol is secure against FSHA.

\begin{myframe}{}
\begin{multicols}{2}

    \underline{$\mathcal{ADV}$ sends}:
    
        $f$ and $f'$ to $\mathcal{CHAL}$
    \columnbreak 
    
    \underline{$\mathcal{ADV}$ guesses}:
    
    $b'$, where $b' == b$

\medskip
\end{multicols}
 \begin{center}

     \underline{$\mathcal{CHAL}$ computes}:

    $b \leftarrow \mathsf{Rand(\cdot)}$, where $b \in \{0,1\};$\\
    If $b==0$, $\mathsf{f}_j \leftarrow \mathsf{KeyGen}(1^\lambda,f);$\\
    Else, $\mathsf{f}_j \leftarrow \mathsf{KeyGen}(1^\lambda,f');$\\
    $\mathcal{CHAL}$ sends $\mathsf{f}_j$ to $\mathcal{ADV}$

 \end{center}
 
\end{myframe}

\end{proof}


\begin{proposition}[VIIA Soundness]
\label{theorem:vi-soundness}
   Let $\mathcal{ADV}$ be a semi-honest adversary that corrupts at most one of the two servers ($P_0$ or $P_1$) involved in the protocol. Then $\mathcal{ADV}$ cannot successfully infer the raw input data through VIIA.
\end{proposition}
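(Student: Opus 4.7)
The plan is to show that an adversary $\mathcal{ADV}$ controlling a single server never sees the activation map $ATm$ in the clear, and that what it does observe is information-theoretically independent of $ATm$. Since a successful VIIA requires visual correlation between what the server holds and the client's input, removing any such correlation rules out the attack. I will proceed in three steps.

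First, I would characterize exactly what a corrupted server observes throughout the protocol. By inspection of Algorithms~\ref{alg:clientvsl} and~\ref{alg:servervsl}, the only message flowing from the client to either server during forward propagation is $x_{pub} = ATm + \alpha$, together with the server's own function key $\mathsf{f}_j$ and the share of the label $\langle y\rangle_j$. During backward propagation the corrupted server obtains intermediate gradient shares, but again only its own share of each quantity. No party ever sends $ATm$, $\alpha$, or the other server's key to $\mathcal{ADV}$.

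Second, I would argue a one-time-pad-style indistinguishability for $x_{pub}$. Because $\alpha$ is drawn uniformly at random and is independent of the input $x$ and of $ATm$, the distribution of $x_{pub}=ATm+\alpha$ in the relevant additive group is uniform and independent of $ATm$. Formally, for any two distinct activation maps $ATm_0, ATm_1$ produced by the client, the distributions of $x_{pub}$ conditioned on $ATm=ATm_0$ and on $ATm=ATm_1$ are identical. Hence no function applied to $x_{pub}$ alone, including any image-reconstruction or visual-similarity procedure that $\mathcal{ADV}$ might run, can distinguish between possible underlying inputs with better than trivial advantage.

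Third, I would combine this with Proposition~\ref{theorem:key-distribution} to rule out that the corrupted server can leverage its FSS key $\mathsf{f}_j$ to unmask anything. By the secrecy property of FSS recalled in Section~\ref{subsec:fss} and the contradiction argument used in the previous proposition, a single share $\mathsf{f}_j$ is indistinguishable from a share of any other function in the family; in particular, it carries no information about $\alpha$ (the mask used to build $x_{pub}$ lives on the client side, not inside the server-side function shares). A standard hybrid/reduction argument would then show that the joint view $(\,x_{pub},\mathsf{f}_j,\langle y\rangle_j,\text{gradient shares}\,)$ is simulatable from $\mathcal{ADV}$'s a~priori knowledge (the shadow dataset and protocol specification) together with uniformly random tapes, and therefore reveals negligible information about the raw input, ruling out VIIA.

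The main obstacle I anticipate is the backward-propagation step: unlike the forward pass, where the one-time-pad argument on $x_{pub}$ is crisp, the gradient share $\frac{\partial J_j}{\partial ATm_j^{(l+1)}}$ that the server returns to the client is produced by the server itself and could, in principle, carry structural information about the weights and intermediate shares. Closing this gap cleanly requires showing that this gradient is likewise an additive share masked by FSS-derived randomness, so that a single server's view remains simulatable; this is the part of the argument that would need the most care, and it is where the assumption of non-collusion between $P_0$ and $P_1$ is essential.
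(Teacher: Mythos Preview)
Your plan is correct and rests on the same core mechanism as the paper: the client only ever reveals $x_{pub}=ATm+\alpha$, the fresh mask $\alpha$ hides $ATm$, and Proposition~\ref{theorem:key-distribution} is invoked to argue that a single server's FSS key gives no leverage toward unmasking. Where you differ is in rigor and in what you lean on. The paper's argument is considerably more informal: it states the masking, appeals to Proposition~\ref{theorem:key-distribution}, and then supplements with an \emph{empirical} demonstration (Figures~\ref{fig:vis_inv_fss_conv1} and~\ref{fig:vis_inv_fss_conv2}) showing that the per-server views of the masked activations are visually unrecognizable. Your one-time-pad indistinguishability statement and the simulatability-of-view sketch are stronger and cleaner than what the paper actually proves; conversely, you omit the visual evidence that the paper treats as part of the argument. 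Finally, the backward-propagation concern you flag is a genuine subtlety that the paper's proof does not address at all, so your caution there is well placed rather than a gap relative to the original.
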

\begin{proof}
Let $\mathcal{ADV}$ be a semi-honest adversary that has compromised one of the two servers $P_j$, $j \in [0,1]$. 
In addition, assume a noisy intermediate map $x_{pub} = ATm + \alpha$, where $ATm$ is the original intermediate activation map and $\alpha$ is a set amount of noise. Then, if $\mathcal{ADV}$ cannot acquire $ATm$ from $x_{pub}$, then our protocol is secure against raw data inference through Visual Invertibility Inference Attack.

The use of FSS addresses the privacy leakage caused by VIIA as the $ATm$ 
is masked with a specific amount of noise ($x_{pub} = ATm + \alpha$) for both servers $P_j$. Following this a single corrupted server $P_j$ would not be able to discern any additional information from the obtained $x_{pub}$ without guessing the function share held by the other, non-corrupted server as proven in Proposition~\autoref{theorem:key-distribution}. Additionally, if the $\mathcal{ADV}$ is unable to obtain $\alpha$, then it is improbable that $ATm$ could be reconstructed through $x_{pub}$, when an adequate amount of noise is added. To this extent, we also showcase a visualization of the masked $ATm$ from both servers in \autoref{fig:vis_inv_fss_conv1} and \autoref{fig:vis_inv_fss_conv2} to show that $x_{pub}$ becomes unrecognisable to the original convolutions $ATm$.


\begin{figure}[!h]
    \centering
    \subfloat[First Convolution]{\includegraphics[width=0.23\textwidth]{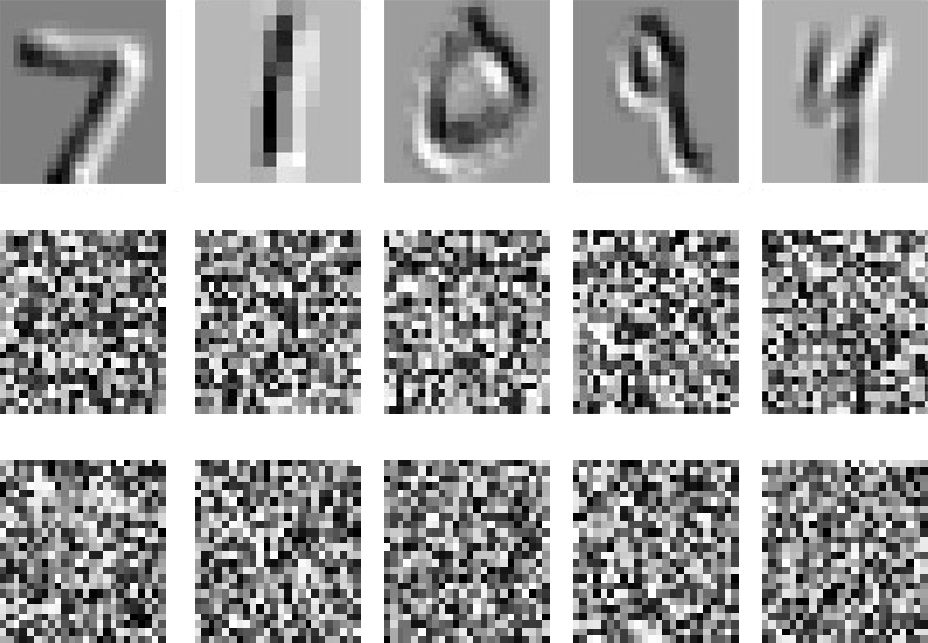}\label{fig:vis_inv_fss_conv1}}
    \rulesep
    \subfloat[Second Convolution]{\includegraphics[width=0.23\textwidth]{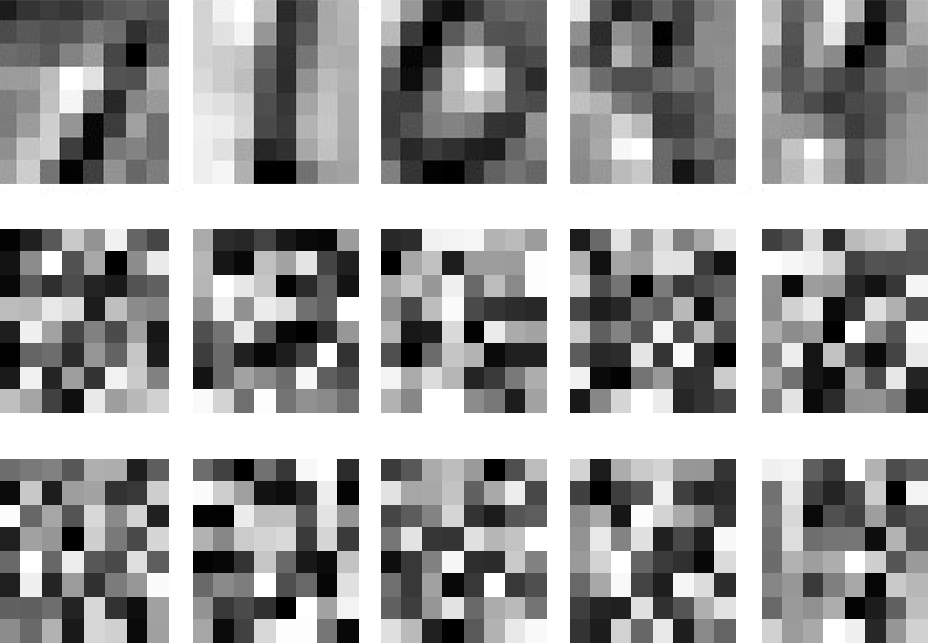}\label{fig:vis_inv_fss_conv2}}
    \caption{Visual invertibility inference attack with FSS applied: First row shows the original $ATm$ after the first Conv2D (before the max-pooling operations) from the MNIST dataset. Second row shows what $P_{0}$ receives when the FSS protocol is applied. Third row what $P_{1}$ receives when the FSS protocol is applied. (a) Image Size: $24\times24$; (b) Image Size: $8\times8$}
    \label{fig:vis_inv_fss}
\end{figure}

\end{proof}

\section{Performance Analysis}
\label{sec:perfanal}
This section covers our experimental setup and the results of our private vanilla SL protocol.

\subsection{Dataset}
\label{subsec:datasets}

\noindent \textit{\textbf{MNIST.}} \enskip
To test the applicability of our approach we used the MNIST dataset~\cite{lecun_cortes_j.c.burges}. It consists of a total of~70,000 images of which 60,000 are in the training set and the remaining~10,000 are in the test set. All of the images are~28x28 pixels and display a black and white handwritten number ranging from~0 to~9. 

\subsection{Experimental setup}
\label{subsec:experiment}

We test and compare our 
SL protocol with 
AriaNN~\cite{ryffel2020ariann}. For the experiments, we 
used a machine running Ubuntu~20.04 LTS, processor~12th generation Intel Core i7-12700, 32 GB RAM mesa Intel graphics (ADL-S GT1). 
We implemented our code using Python~3.7 and it is made 
available on 
GitHub\footnote{\url{https://github.com/UnoriginalOrigi/SplitFSS}}.
FSS implemented using the PySyft framework\footnote{\url{https://github.com/OpenMined/PySyft/tree/ryffel/0.2.x-fix-training}} and a modified version of AriaNN\footnote{\url{https://github.com/LaRiffle/ariann}}. 
To get a more consistent and reproducible result, we ran our experiments a total of~10 times for each experiment, totaling~60 executions of the code with different settings. However, in our study, we are reporting the top three results for each model (refer to \autoref{table:performance}). In the experiments we measured the communication costs, training times, and accuracy of the model after each epoch and report the total costs after full training. The communication costs are calculated by measuring the byte length of serialized objects sent through packets. Time measurements are taken by making use of the in-built \textit{time} module from Python. The loss is calculated using mean squared error.


In terms of hyperparameters, we train all networks with $E = 10$  epochs, $\eta=0.002$
learning rate, $p = 0.9$ momentum, a batch size of $n = 128$, and the total training data samples $m = 59904$ (as the incomplete final batch is dropped).
These parameters were chosen as they showed best results during preliminary testing and are equivalent to the parameters set for the AriaNN implementation~\cite{ryffel2020ariann}. This lets us accurately compare our results with AriaNN without the parameters making a large impact on the comparison.

\subsection{Evaluation}
\label{subsec:evaluation}
This section provides an overview of the outcomes derived from our experiments. It is organized into three distinct parts, with each part dedicated to a specific architectural framework as defined in \autoref{sec:architecture}.
First, the local models with and without SL are covered. These models run in plaintext without making use of FSS to create a baseline to compare further results and also measure the effects of FSS on the performance of the models. Second, the reproduced results of AriaNN~\cite{ryffel2020ariann}, which we use to compare to our proposed models, are provided. Finally, the results of the private vanilla SL model and its comparison with the results of AriaNN are presented. 

\medskip

\noindent \textit{Plaintext models.} \enskip
\label{subsubsec:localmodel_eval}
As previously mentioned, it's important to note that training the models on plaintext data does not account for FSS. These models consist of both a local model and a standard SL model. We have successfully implemented and replicated the results for these models. After training our local model for approximately 10 epochs, we achieved the highest test accuracy at 99.36\%, with the second and third highest test accuracies at 99.33\% and 99.28\%, respectively.

Furthermore, our experiments indicate that training the vanilla SL model 
on plaintext data yield similar accuracy results when compared to training the local model. As illustrated in \autoref{table:performance}, the top three results for the vanilla SL model are 99.29\%, 99.28\%, and 99.21\%. These accuracy values closely align, suggesting that the SL model can be effectively applied to CNN models on the MNIST dataset without experiencing a significant degradation in classification accuracy.

We also consider the training time and communication cost of the vanilla 
SL model and compare it to the local model. The total training time 
for~10 epochs is nearly identical, ranging between~1 and~2 minutes. Interestingly, even though the vanilla 
model involve communication between the client and server to exchange $ATm$ and gradients, their training times remain similar to that of the local model. This similarity arises because the local model's training time encompasses the duration during which the client 
transmits images to the server
. Consequently, both models have nearly the same training times. 

Regarding the communication cost, we are considering costs on both the client-side and server-side, as well as the total costs during training and testing. In \autoref{table:performance} we show the variation between the client-side and server-side communication costs between the local and SL model. We observe that the SL approach has slightly higher server-side communication costs, than the local model, but client-side communications are about $3\times$ larger on the local model. This difference arises from the size of the data being shared as in the local model, the client must send $28\times28$ images, while in the SL model, the client send the intermediate $ATm$, which is only $8\times8$.
Even though the vanilla 
SL model provides similar accuracy and has the same training and communication cost when compared to the locally computed model, it is essential to note that the vanilla SL model introduce a privacy leakage, which we mitigate using FSS.

\medskip

\noindent \textit{FSS without SL.} \enskip
\label{subsubsec:FSSwithoutsplit_eval}
For our benchmarking we used AriaNN~\cite{ryffel2020ariann} -- a promising solution for doing PPML on sensitive data. The authors have shown that AriaNN provides competitive performance compared to existing work, and have demonstrated its practicality by running private training on MNIST using models like LeNet. Given the promising outcomes achieved by AriaNN in comparison to other works, we adopted it as the base paper for our study.

In our experimentation, we replicated the results for AriaNN, which we refer to as the private local model. To thoroughly assess and compare the impact of FSS on performance, we initialized two models: one before applying FSS (referred to as the ``public local model'') and one after applying FSS, both with the same set of initial weights. Our analysis, clearly indicates that the accuracy obtained before and after applying FSS remains nearly identical. In simpler terms, the implementation of FSS exhibits no noticeable effect on the performance of the models.
As can be seen in \autoref{table:performance}, the private local model achieves an impressive test accuracy of approximately~97.29\%, closely followed by accuracies of~97.04\% and~97.61\%. This showcases that the private local model attains a commendable level of accuracy, lacking only by a modest~2\% to~3\% when compared to the public local models.

However, when we examine the training time for private local model over~10 epochs, a significant contrast emerges. Private local model necessitates approximately~1288 minutes, which is roughly~900$\times$ more time-consuming than the training of the public local models. Additionally, the training communication cost incurred by private local model is~2$\times$ higher, and the testing communication cost is~4$\times$ greater compared to the public local models. The communication costs reported in \autoref{table:performance} do not take into account the preprocessing communication costs for transmitting the FSS keys and securely computing each function. The preprocessing costs for the private local model is~10.572851 MB per batch.

In conclusion, private local model demonstrates its potential as a promising solution for maintaining data privacy while achieving commendable accuracy in ML. Nonetheless, it is important to remember that using private local model comes with significant drawbacks in terms of the time it takes to train models and the extra communication needed. This should be carefully weighed when deciding if it is the right choice when selecting a privacy-preserving framework for specific use cases.

\begin{table*}
    \centering
    \caption{Performance comparison: training and testing results for various privacy-preserving models}
    \label{table:performance}
\begin{tabular}{|ll|llll|ll|}
\hline
\rowcolor{gray}
\multicolumn{2}{|c|}{\color{white}\textbf{PETs}} &
  \multicolumn{4}{c|}{\color{white}\textbf{Training Statistics}} &
  \multicolumn{2}{c|}{\color{white}\textbf{Testing Statistics}} \\ \hline
  \rowcolor{antiquewhite}
\multicolumn{1}{|l|}{FSS} &
  SL &
  \multicolumn{1}{l|}{\begin{tabular}[c]{@{}l@{}}Client \\ Communication  \\(MB)\end{tabular}} &
  \multicolumn{1}{l|}{\begin{tabular}[c]{@{}l@{}}Server \\ Communication \\(MB)\end{tabular}} &
  \multicolumn{1}{l|}{\begin{tabular}[c]{@{}l@{}}Training\\ Time  (minutes)\end{tabular}} &
  \begin{tabular}[c]{@{}l@{}}Training \\ Communication \\ (MB)\end{tabular} &
  \multicolumn{1}{l|}{\begin{tabular}[c]{@{}l@{}}Testing\\ Communication \\(MB)\end{tabular}} &
  \begin{tabular}[c]{@{}l@{}}Testing \\ Accuracy (\%)\end{tabular} \\ \hline
\multicolumn{1}{|l|}{\multirow{6}{*}{Public}} &
  \multirow{3}{*}{Local} &
  \multicolumn{1}{l|}{1931.855552} &
  \multicolumn{1}{l|}{1.656024} &
  \multicolumn{1}{l|}{1.4402} &
  1933.511576 &
  \multicolumn{1}{l|}{313.394140} &
  99.36 \\ \cline{3-8} 
\multicolumn{1}{|l|}{} &
   &
  \multicolumn{1}{l|}{1931.853762} &
  \multicolumn{1}{l|}{1.656288} &
  \multicolumn{1}{l|}{1.4307} &
  1933.510050 &
  \multicolumn{1}{l|}{313.394040} &
  99.33 \\ \cline{3-8} 
\multicolumn{1}{|l|}{} &
   &
  \multicolumn{1}{l|}{1931.854346} &
  \multicolumn{1}{l|}{1.658358} &
  \multicolumn{1}{l|}{1.4265} &
  1933.512704&
  \multicolumn{1}{l|}{313.394140} &
  99.28 \\ \cline{2-8} 
\multicolumn{1}{|l|}{} &
  \multirow{3}{*}{Vanilla} &
  \multicolumn{1}{l|}{641.172596} &
  \multicolumn{1}{l|}{25.745798} &
 \multicolumn{1}{l|}{1.4183} &
  666.702220 &
  \multicolumn{1}{l|}{102.522446} &
  99.29 \\ \cline{3-8}  
\multicolumn{1}{|l|}{} &
   &
  \multicolumn{1}{l|}{641.170668} &
  \multicolumn{1}{l|}{25.745258} &
  \multicolumn{1}{l|}{1.4324} &
  666.699764 &
  \multicolumn{1}{l|}{102.522176} &
  99.28 \\ \cline{3-8} 
\multicolumn{1}{|l|}{} &
   &
  \multicolumn{1}{l|}{641.173568} &
  \multicolumn{1}{l|}{25.746834} &
   \multicolumn{1}{l|}{1.426} &
  666.704240 &
  \multicolumn{1}{l|}{102.522820} &
  99.21 \\ \cline{2-8} 
\hline
\multicolumn{1}{|l|}{\multirow{6}{*}{Private}} &
  \multirow{3}{*}{Local} &
  \multicolumn{1}{l|}{3863.333306}
  &
  \multicolumn{1}{l|}{3.571760} &
  \multicolumn{1}{l|}{1288.6959} &
  3866.905066 &
  \multicolumn{1}{l|}{1253.010860} &
  97.04 \\ \cline{3-8} 
\multicolumn{1}{|l|}{} &
   &
  \multicolumn{1}{l|}{3860.620376} &
  \multicolumn{1}{l|}{3.574165} &
  \multicolumn{1}{l|}{1271.6553} &
  3864.194541 &
  \multicolumn{1}{l|}{1253.010740} &
  97.29 \\ \cline{3-8} 
\multicolumn{1}{|l|}{} &
   &
  \multicolumn{1}{l|}{3861.326498} &
  \multicolumn{1}{l|}{3.576877} &
  \multicolumn{1}{l|}{1190.5958} &
  3864.903375 &
  \multicolumn{1}{l|}{1253.010820} &
  96.71 \\ \cline{2-8}
\multicolumn{1}{|l|}{} &
  \multirow{3}{*}{Vanilla} &
  \multicolumn{1}{l|}{1332.080326} &
  \multicolumn{1}{l|}{3.573196} &
  \multicolumn{1}{l|}{191.6493} &
  1335.653522 &
  \multicolumn{1}{l|}{204.857314} &
  97.26 \\ \cline{3-8} 
\multicolumn{1}{|l|}{} &
   &
  \multicolumn{1}{l|}{1332.082984} &
  \multicolumn{1}{l|}{3.572936} &
  \multicolumn{1}{l|}{220.4564} &
  1335.655920 &
  \multicolumn{1}{l|}{204.857348} &
  97.14 \\ \cline{3-8} 
\multicolumn{1}{|l|}{} &
   &
  \multicolumn{1}{l|}{1332.086146} &
  \multicolumn{1}{l|}{3.574748} &
\multicolumn{1}{l|}{161.4921} &
  1335.660894 &
  \multicolumn{1}{l|}{204.857618} &
  96.89 \\ \cline{2-8} 
\hline
\end{tabular}
\end{table*}

\noindent \textit{Private vanilla SL.} \enskip
\label{subsubsec:FSSsplit_eval}
The private vanilla SL model addresses privacy concerns in public models while maintaining nearly the same accuracy, experiencing only a 2\% reduction compared to both public local and public vanilla models (see~\autoref{table:performance}). This is because, as previously discussed, both SL and FSS can be implemented without a notable reduction in model accuracy. It also achieves comparable accuracy to the private local model in around~192 minutes, a significant advantage. The shorter training time is due to executing all client-side layers in plain, unlike the private local model, which employs various techniques such as beaver triples (for convolution layers) and FSS (for ReLU). Additionally, the total communication cost of private vanilla SL is~2.9$\times$ lower than that of the private local model. Notably, the server-side communication cost for private vanilla SL is remarkably low (similar to the public vanilla SL model), compared to the private local model. It is important to note that private vanilla SL additionally requires~0.119634 MB of preprocessing communications per batch, which is a~88.3$\times$ reduction when compared to private local FSS (10.572851 MB per batch). This is because we execute client-side layers on plaintext, whereas private local model involves more interaction between parties, increasing communication costs. In summary, private vanilla SL model outperforms the private local model in terms of training time and communication cost while maintaining similar accuracy.  Moreover, it offers better privacy than either public models, though with a slight accuracy trade-off. It is important to mention that our private vanilla SL model has higher training time and communication costs compared to the public models.

\noindent \textbf{Open Science and Reproducible Research:} 
To support open science and reproducible research, and provide researchers with opportunity to use, test, and extend our work, source code used for the evaluations is publicly available\footnote{\url{https://github.com/UnoriginalOrigi/SplitFSS}}.

\section{Conclusion}
\label{sec:conclusion}

Although split learning was once viewed as a promising collaborative learning method for safeguarding the privacy of client data, recent studies have highlighted concerns about its effectiveness in preventing potential privacy leaks from client data. In response to these concerns, we have introduced a hybrid approach that combines split learning with another privacy-preserving technique known as function secret sharing. Our research demonstrates how the utilization of function secret sharing can effectively mitigate privacy breaches in split learning while still maintaining efficiency.

Our findings indicate that our proposed approach, which combines function secret sharing with vanilla split learning, outperforms existing protocols (such as private local models) in terms of communication efficiency and complexity. Notably, it achieves the same level of accuracy while offering improved privacy protections.

\begin{acks}
This work was funded by the HARPOCRATES EU research project (No. 101069535) and the Technology Innovation Institute (TII), UAE, for the project ARROWSMITH.
\end{acks}

\bibliographystyle{ACM-Reference-Format}
\balance
\bibliography{main}

\end{document}